\documentclass[11pt]{article}
\usepackage[margin=1in]{geometry}
\usepackage{authblk}
\usepackage[english]{babel}
\usepackage{amsmath,amssymb,amsthm}
\usepackage{array}
\usepackage[caption=false,font=normalsize,labelfont=sf,textfont=sf]{subfig}
\usepackage{textcomp}
\usepackage{stfloats}
\usepackage{url}
\usepackage{verbatim}
\usepackage{graphicx}
\usepackage{cite}
\usepackage{xcolor}
\usepackage[colorlinks=true, linkcolor=blue, urlcolor=blue, citecolor=blue]{hyperref}
\usepackage{float}
\usepackage{algorithm}
\usepackage{algpseudocode}

\newtheorem{theorem}{Theorem}
\newtheorem{prop}{Proposition}
\newtheorem{aspt}{Assumption}
\newtheorem{corollary}{Corollary}
\newtheorem{rem}{Remark}
\newtheorem{lem}{Lemma}
\newtheorem{defn}{Definition}

\newcommand{\reals}{\mathbb{R}}
\newcommand{\E}{\mathbb{E}}
\newcommand{\bbP}{\mathbb{P}}
\newcommand{\zero}{\boldsymbol{0}}
\newcommand{\one}{\boldsymbol{1}}
\newcommand{\Ghat}{\hat{G}}
\newcommand{\Bhat}{\hat{B}}
\newcommand{\eG}{\text{error}_G}
\newcommand{\eB}{\text{error}_B}
\newcommand{\eW}{\text{error}_W}
\newcommand{\td}{\text{d}}
\newcommand{\vhat}{\hat{v}}
\newcommand{\lambdahat}{\hat{\lambda}}

\begin{document}

\title{Inlier Recovery for Robust Registration via Gram-Matrix Overlap}

\author{
Ruizi~Wu,
Yuehaw~Khoo,
and Wanjie~Wang%
\thanks{Ruizi Wu is with the Department of Statistics and Data Science, National University of Singapore, Singapore.}
\thanks{Yuehaw Khoo is with the Department of Statistics, The University of Chicago, Chicago, IL 60637 USA.}
\thanks{Wanjie Wang is with the Department of Statistics and Data Science and the Department of Mathematics, National University of Singapore, Singapore.}
\thanks{Corresponding author: Wanjie Wang (e-mail: wanjie.wang@nus.edu.sg).}
}

\maketitle

\begin{abstract}
Robust point-set registration in the presence of noise and outliers is challenging because the matched points (inliers) must be identified before reliable alignment can be performed. Existing robust registration methods typically optimize over the transformation space and are often designed for regimes with a nonvanishing fraction of inliers. In this paper, we study the inlier recovery problem arising in robust registration by comparing two datasets through the Hadamard product of their Gram matrices. This formulation converts the inlier identification into a structured recovery problem and avoids direct optimization over the rotation group. Based on this idea, we develop two methods: an eigenvector matching method based on the leading eigenvector of the Gram-matrix overlap, and a row-sum matching method based on aggregated entrywise comparison. We show that the eigenvector method achieves weak recovery when the dimension and sample size are of the same order, while the row-sum method achieves exact recovery under a broader range of dimensional scalings. In particular, when the dimension is comparable to the sample size, exact recovery is possible even when the inlier fraction vanishes, with the number of inliers as small as order $\sqrt{n}$, up to logarithmic factors. We also discuss a parallel implementation for large-scale settings. Numerical experiments on brain imaging data and image examples demonstrate that the proposed methods effectively identify matched structure under substantial corruption.
\end{abstract}

\paragraph{Keywords.} robust registration, inlier recovery, Gram matrix, Hadamard product, eigenvector matching, row-sum matching, exact recovery, outliers

\section{Introduction}
Rigid registration is a foundational problem in computer vision, robotics, and manifold learning, underpinning tasks ranging from multi-view geometry \cite{HartleyZisserman04} and Simultaneous Localization and Mapping \cite{DurrantWhyteBailey06,Cadena16}
to manifold alignment \cite{WangMahadevan08,Ovsjanikov12FMaps,GraveJoulinBerthet19}. In its classical form, one seeks a rotation (and sometimes a translation) that aligns two point sets. The problem of determining a rotation under a least-squares formulation penalizing misalignment admits a closed-form solution (see, e.g., Wahba's formulation and the SVD-based Procrustes solutions \cite{Wahba65,Arun87}). However, extending such a formulation to settings with grossly corrupted outliers is not straightforward: replacing the quadratic loss by robust alternatives \cite{verboon1994robust} for outlier rejection typically leads to non-convex optimization over the rotation group (with notable exceptions, such as the certifiable relaxations in \cite{yang2019polynomial}, albeit at a higher computational cost). Moreover, across the literature on robust point-set registration with outliers \cite{horowitz2014convex,bohorquez2020maximizing,yang2019polynomial,zhang2021fast}, it is typically shown—either theoretically or empirically—that one can tolerate a constant fraction of outliers under a specified noise model. It remains unclear whether one can go beyond this regime and handle a more adversarial setting in which the fraction of inliers is vanishingly small.

To cope with the difficulty of optimizing over the rotation group in robust registration and also answer to what extent we can tolerate outliers, we recast the task as a novel variant of a planted clique problem aimed at identifying the inlier points. By forming a similarity matrix within each point set, we seek the largest subset of indices on which the two similarity matrices exhibit maximal overlap. This connection brings robust registration into contact with a broad set of tools and analyses from graph inference, including planted clique and densest subgraph models \cite{AlonKrivelevichSudakov98,AriasCastroVerzelen14,HajekWuXu15}.

Our contributions are two-fold:
\begin{itemize}
\item We propose a spectral method that uses the leading eigenvector of a particular overlap matrix to identify the inlier points. Unlike the planted-clique literature, where i.i.d.-type noise is added directly to the graph adjacency matrix, our setting requires analyzing the Hadamard product of two Gram matrices. Since this matrix is fourth-order in the noise, the resulting perturbation exhibits heavier tails than the sub-gaussian or sub-exponential tails typically assumed in planted-model analyses. We show that when the number of nodes $n$ and the ambient dimension $d$ are of the same order, and the number of uncorrupted points is a constant fraction of $n$, the leading eigenvector is consistent with the indicator vector in $\ell_2$ norm (i.e., weak recovery \cite{li2021convex} in the planted model literature).
\item Establishing exact recovery (strong consistency) for the spectral method is more challenging, since it requires an $\ell_\infty$ eigenvector perturbation bound. To obtain strong consistency, in the spirit of \cite{AriasCastroVerzelen14}, we develop an alternative statistic based on row sums. A hypothesis test on the row-sum statistic allows us to derive exact recovery guarantees for the labels. With this method, we obtain a stronger result: when the number of nodes $n$ and the ambient dimension $d$ are of the same order, and the number of uncorrupted points is $\Omega(\sqrt{n})$, we achieve exact recovery of the inlier labels. Unlike the aforementioned spectral method, this approach can recover the labels even when the number of uncorrupted points grows much more slowly than the graph size. While results of this flavor are known in the planted clique literature, the classical planted clique model typically assumes i.i.d.\ sub-Gaussian noise added directly to the adjacency matrix, which is not the case here, where we have the Hadamard product of two Gram matrices. This result also suggests that robust registration can handle outliers beyond the typical constant-fraction corruption regime.
\end{itemize}

The remainder of the paper is organized as follows. Section~\ref{section:ModelSetting} introduces the model setting. In Section~\ref{section:eigMatch}, we propose a new eigenvector matching algorithm and establish convergence results for it. Section~\ref{section:rsMatch} presents a row-sum matching algorithm with theoretical guarantees, which relaxes the conditions on the dimension and sample size. A brief discussion of the parallel computing of our algorithms is provided in Section~\ref{section:parallel}. Section~\ref{section:numerical} presents numerical experiments, and Section~\ref{sec:case} provides a case study. Finally, Section~\ref{section:conclusion} concludes the paper by discussing the implications and possible extensions of our work.

\subsection{Notation:} For any matrix $M$, denote $M_i$ as the $i$-th column of $M$ and $M_{ij}$ as the $(i, j)$-th entry of $M$.  For two matrices $X$ and $Y$ with the same size, denote the Hadamard product as $X \circ Y$. Let $\zero_d=(0, \ldots, 0)^\top \in\reals^{d}$ and $\one_d = (1, \ldots, 1)^\top \in\reals^{d}$, where the subscript $d$ is omitted when unambiguous. Let $[n] = \{1,.\ldots, n\}$. 
Denote $\|v\|$ be the $\ell_2$ norm for a vector $v$ and $\|M\|$ be the spectral norm for a matrix $M$.
For any set $A$, use $|A|$ to represent the cardinality of the set. The constant $C$ and the 
$\gamma$ dependent constant $C_\gamma$ may change line to line.

\section{Model Setting}\label{section:ModelSetting}

\noindent Suppose we have two $d$-dimensional data point sets $\{X_i\}$ and $\{Y_i\}$, $i = 1, 2, \cdots, n$. 
Some points match each other up to a rotation, for which we call inliers. But some points are outliers that cannot be matched. 
Let $G$ denote the inlier set that $Y_i = R X_i$, where $R \in \mathbb{R}^{d\times d}$ is an orthogonal rotation matrix. Let $B$ denote the set of outliers, where $X_i$ and $Y_i$ are independent of each other. They might be irrelevant objects, or pure noise. The unmatched nodes in $B$ usually cause a big challenge in data registration \cite{yang2019polynomial, horowitz2014convex, bohorquez2020maximizing, zhang2021fast}. Therefore, it is of interest to identify $G$ and $B$, and data registration on only $G$ enjoys a satisfying result. 

We use $X=(X_1,...,X_n) \in\reals^{d\times n}$ and $Y=(Y_1,...,Y_n) \in\reals^{d \times n}$ to represent these two point sets. Each column of $X$ is a data point, and each row is a feature vector over all data points. We formulate the index sets $G$ and $B$ in Assumption \ref{aspt:setup}. 

\begin{aspt}\label{aspt:setup}
    To streamline the technical presentation, we assume the following:
    \begin{enumerate}
        \item The index sets $G \subset [n]$ and $B\subset [n]$ satisfies $G\bigcap B=\phi$ and $G\bigcup B=[n]$.
        \item Each data point follows $X_i \sim \mathcal{N}(\zero, I_d)$ independently. Furthermore, there exists an orthogonal matrix $R \in \mathbb{R}^{d \times d}$, so that 
        \[
        \left\{\begin{array}{ll}
        Y_i = RX_i, & i \in G\\
        Y_i \sim \mathcal{N}(\zero, I_d) \mbox{ independently }, & i \in B.
        \end{array}
        \right.
        \]
    \end{enumerate}
\end{aspt}

To differentiate $G$ and $B$, we exploit the Gram matrices for the two data point sets. Take $X$ as an example, the Gram matrix $X^\top X$ gives the similarities across node pairs, which is independent of the rotation $R$. Hence, for $i, j \in G$, the similarity between node pair $(i,j)$ is $X_i^\top X_j$ in $X^\top X$, and $Y_i^\top Y_j$ in $Y$. By $Y_i = RX_i$, we have that 
\[
Y_i^\top Y_j = X_i^\top R^\top R X_i = X_i^\top X_i.
\]
The two quantities are the same, even with the existence of rotation $R$. As a contrast, if $i \in B$ or $j \in B$, the similarities for $(i,j)$ differ much in $X^\top X$ and $Y^\top Y$. 

Based on this idea, we define the product matrix between these two Gram matrices:
\begin{equation}\label{equation:H}
    H = \left( X^\top X \right) \circ \left( Y^\top Y \right).
\end{equation}
The product matrix $H$ captures this comparison information. On the diagonals, $H_{ii} = (X_i ^\top X_i)(Y_i ^\top Y_i)$, which follows $\|X_i\|^4$ for $i \in G$ and $\|X_i\|^2 \|Y_i\|^2$ for $i \in B$. The former one has a larger magnitude than the latter due to the fourth moment. This difference is more obvious on the off-diagonals. When $i \neq j$, the entry $H_{ij} = (X_i^\top X_j)^2$ for $i, j \in G$ and  $E[H_{ij}] = 0$ if one of them is in $B$, due to the independent normal random variable.

Based on the different performance of entries in $H$, we develop two estimation methods: one is to employ the eigenvector information, and the other is to employ the row sums. In Section \ref{section:eigMatch}, we propose the eigenvector matching algorithm. This algorithm classifies entries of the leading eigenvector of $H$ into two sets to obtain the estimated sets $\hat{G}$ and $\hat{B}$. The consistency can be guaranteed when the dimension $d$ and the sample size $n$ are at the same divergence rate, i.e., $d/n =\gamma\in(0, \infty)$ for a constant $\gamma$. In Section \ref{section:rsMatch}, we propose the row sum matching approach. The consistency is guaranteed when $\log n \lesssim d \lesssim n^2/\log n$. We further develop the parallel computing algorithms in Section \ref{section:parallel} for both algorithms. 

Throughout this paper, we focus on the proportion of missing points in $G$ and $B$. For $G$, the error rate is the proportion of points in $G$ that are classified into $\hat{B}$, i.e., the false negatives. For $B$, $error_B(\hat{B})$ is the proportion of points classified into $\hat{G}$. The overall error rate is the proportion of misclassified points over the whole dataset. The definition is as follows:
\begin{defn}
    Let $\Ghat$ and $\Bhat$ be the estimated index sets of $G$ and $B$. The error rates on $G$ and $B$ are defined as 
    \begin{equation*}
        \eG(\Ghat) := \frac{|G \bigcap \Ghat^c|}{|G|}, \quad
        \eB(\Bhat) := \frac{|B \bigcap \Bhat^c|}{|B|}.
    \end{equation*}
    Similarly, define the overall error rate on the data as
    \begin{equation*}
        \eW(\Ghat, \Bhat) := \frac{|G \bigcap \Ghat^c| + |B \bigcap \Bhat^c|}{n}
    \end{equation*}
\end{defn}

\section{Eigenvector Matching} \label{section:eigMatch}

\subsection{Matching Algorithm with the Leading Eigenvector}

Consider the product matrix $H$ defined in Equation \eqref{equation:H}. Under Assumption \ref{aspt:setup}, the expectation $\E[H]$ exhibits a block structure induced by the partition $[n]=G\cup B$. To understand the information carried by this structure, we first examine the leading eigenvalues and eigenvectors of $\E[H]$.

Suppose Assumption \ref{aspt:setup} holds. Let $\lambda_1$ and $\lambda_2$ denote the largest and the second largest eigenvalues of $\E[H]$, respectively. Then
\begin{equation*}
    \lambda_1 = d^2 + d(|G|+1), \qquad \lambda_2 = d^2.
\end{equation*}
Hence, the eigengap is $\lambda_1-\lambda_2=d|G|=drn$, where $r=|G|/n\in(0,1)$. The leading eigenvector $v$ corresponding to $\lambda_1$ is given by
\begin{equation*}
    v_i=
    \begin{cases}
        1/\sqrt{rn}, & i\in G,\\
        0, & i\in B.
    \end{cases}
\end{equation*}
Therefore, at the population level, exact separation between $G$ and $B$ is achieved by thresholding $v$ at the level $tn^{-1/2}$ for any $t\in(0,r^{-1/2})$. This population-level observation motivates the empirical procedure below.

Based on this structure, we propose a matching algorithm using the leading eigenvector of the empirical matrix $H$; see Algorithm \ref{alg:Eigmatching}. We first construct $H$ from the two Gram matrices $X^\top X$ and $Y^\top Y$, then compute the leading eigenvector of $H$, and finally classify the indices into two groups. Ideally, when the empirical leading eigenvector is close to its population counterpart, its coordinates should inherit the same separation pattern.

\begin{algorithm}[htb]
\caption{Eigenvector Matching.}\label{alg:Eigmatching}
\begin{algorithmic}[1]
    \Require Data matrix $X \in \reals^{d\times n}$; Data matrix $Y \in \reals^{d\times n}$; Optional threshold $t$.
    \State Center each row of $X$ and $Y$, i.e. $X\boldsymbol{1} = \boldsymbol{0}$ and $Y\boldsymbol{1} = \boldsymbol{0}$. Then, normalise each column of $X$ and $Y$ so that $X = \left\{ \frac{X_1}{\|X_1\|},\ldots,\frac{X_n}{\|X_n\|} \right\}$ and $Y = \left\{ \frac{Y_1}{\|Y_1\|},\ldots,\frac{Y_n}{\|Y_n\|} \right\}$.
    \State Calculate $H= \left( X^\top X \right) \circ \left( Y^\top Y \right)$.
    \State Find the leading eigenvector $v\in\mathbb{R}^{n}$ of $H$.
    \If{$t$ is given,}
        \State Classify the indices into two sets with $\hat{G} = \{i\in[n]: v_i \geq tn^{-\frac{1}{2}}\}$ and $\hat{B} = \{i\in[n]: v_i < tn^{-\frac{1}{2}}\}$.
    \Else
        \State Apply $k$-means to $v$, treating each entry as a data point. Let $K = 2$ and initial centroids as $n^{-\frac{1}{2}}$ and $-n^{-\frac{1}{2}}$. Assign the cluster with the larger centroid as $\hat{G}$ and the cluster with the smaller centroid as $\hat{B}$.
    \EndIf
    \State \Return The estimated index sets $\hat{G}$ and $\hat{B}$.
\end{algorithmic}
\end{algorithm}

In practice, the proportion $r$ is unknown. If prior information on $r$ is available, then a natural choice is to select $t$ near $1/(2\sqrt r)$, or slightly smaller to be conservative; this choice is further justified by the theoretical error bound given later. If no prior information is available, then any fixed $t\in(0,1)$ always works, since $r\in(0,1)$ implies $r^{-1/2}>1$. This provides a simple universal choice. If one prefers not to tune the threshold parameter, then we recommend using $k$-means, which naturally separates the entries of the leading eigenvector into two groups according to their magnitudes. In the numerical analysis, we compare both thresholding and $k$-means, and both methods perform well.

The centring and normalisation steps in Algorithm \ref{alg:Eigmatching} are also important. They make the empirical behaviour of the columns of $X$ and $Y$ closer to that under the standard Gaussian model assumed in the theoretical analysis, and improve the stability of the subsequent eigenvector-based classification.

\subsection{Consistency Results of Eigenvector Matching}

We now analyse the performance of Algorithm \ref{alg:Eigmatching}. The argument has three steps: first, control $\|H-\E[H]\|$; second, derive an $\ell_2$ bound for the leading eigenvector of $H$; third, convert this bound into misclassification error rates for the thresholding rule. The result yields vanishing misclassification proportion, but not an $\ell_\infty$-type exact separation guarantee.

We begin with the perturbation of the product matrix $H$. This is nontrivial because the entries of $H$ are heavy-tailed and highly dependent, due to the combination of inner products and the Hadamard product. Therefore, standard random matrix results do not apply directly. We use an $\epsilon$-net argument; see Appendix \ref{appendix:theo_eig}. Related covering arguments can be found in \cite{Vershynin_2018}.

\begin{theorem} \label{theo:mainEig}
    Suppose Assumption \ref{aspt:setup} holds and $H$ is defined as in Equation \eqref{equation:H}. Suppose $n,d\to\infty$ with $d/n=\gamma\in(0,\infty)$. Then, with high probability $1-o(1)$,
    \begin{equation}
        \|H-\E[H]\| \leq C_\gamma n^{3/2}\log^2 n.
    \end{equation}
\end{theorem}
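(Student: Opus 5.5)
We bound $\|H-\E[H]\|$ by the variational formula $\|H-\E[H]\|=\sup_{u\in S^{n-1}}|u^\top(H-\E[H])u|$ (the matrix is symmetric), discretized on a $1/4$-net $\mathcal{N}$ of $S^{n-1}$ with $|\mathcal{N}|\le 9^n$, so that $\|H-\E[H]\|\le 2\max_{u\in\mathcal{N}}|u^\top(H-\E[H])u|$. Since $n^{3/2}\log^2 n$ is much larger than the natural scale, a crude concentration per $u$ suffices, provided the tail is at least $e^{-cn}$-small. To get such tails despite the quartic dependence on Gaussians, we first truncate. Let $\mathcal{E}$ be the event that $\max_i\|X_i\|^2,\max_i\|Y_i\|^2\le d+C\sqrt{d\log n}$, that $\max_{i\ne j}|X_i^\top X_j|,\max_{i\ne j}|Y_i^\top Y_j|\le C\sqrt{d}\log n$, and that $\|X\|^2,\|Y\|^2\le C_\gamma n$. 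Standard $\chi^2$ and Bernstein bounds plus a union bound over $n^2$ pairs give $\bbP(\mathcal{E})=1-o(1)$.

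Next we split $H=D+O$ into diagonal and off-diagonal parts. The diagonal part is handled directly. On $\mathcal{E}$, $|H_{ii}-\E H_{ii}|\le C d^{3/2}\sqrt{\log n}$, where $\E H_{ii}$ equals $d^2+2d$ or $d^2$. Hence $\|D-\E D\|\le C_\gamma n^{3/2}\sqrt{\log n}$, which is already within the target.

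For the off-diagonal part, write $X^\top X=\Lambda_X+M_X$ with $\Lambda_X$ its diagonal, and similarly for $Y$. Then $O=M_X\circ M_Y$ and $u^\top O u=\mathrm{tr}(D_u M_X D_u M_Y)$ with $D_u=\mathrm{diag}(u)$. We could try to use $\|A\circ B\|\le\|A\|\,\|B\|$ (Schur), but $\|M_X\|\lesssim n$ only gives $n^2$, which is too weak. The cancellation must instead come from the mean-zero structure.

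To capture it, split $O$ into blocks. The $B\times B$, $G\times B$ and $B\times G$ blocks have mean zero, because $Y$ on $B$ is independent of $X$. For these we condition on $X$: the quadratic form $u^\top O u=\sum_{i\ne j}u_iu_j (X_i^\top X_j)(Y_i^\top Y_j)$ becomes, for fixed $X$ and at least one index in $B$, a degree-2 Gaussian chaos in $Y$. It has the form $\mathrm{vec}(Y)^\top (A\otimes I_d)\mathrm{vec}(Y)$ with $A=D_u M_X D_u$ restricted suitably; in the mixed block the $Y_G=RX_G$ terms are fixed and the chaos is linear-plus-quadratic in $Y_B$. Hanson--Wright then gives deviation $\lesssim \|A\otimes I\|_F\sqrt{t}+\|A\|t$. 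On $\mathcal{E}$ we have $\|A\|_F\le \max|M_{X,ij}|\le C\sqrt d\log n$ and $\|A\|\le\|M_X\|\le C_\gamma n$, so $\|A\otimes I\|_F\le \sqrt d\cdot C\sqrt d\log n$. Taking $t=Cn$ gives the bound $C_\gamma(d\log n\sqrt n + n^2)$. The $\|A\|t$ term of order $n^2$ is too large, so a refinement is needed. We would use truncation within each row: the operator norm of the restricted, $u$-weighted $M_X$ acting on the chaos in $Y$ should be replaced by $\|D_uM_XD_u\|\le \max_{ij}|M_{X,ij}|\cdot\|u\|^2\le C\sqrt d\log n$, using $\|D_uAD_u\|\le \max|A_{ij}|$ for unit $u$ when $A$ is symmetric. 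Indeed $|w^\top D_uAD_uw|\le \max|A_{ij}|(\sum|u_iw_i|)^2\le\max|A_{ij}|$. With this, $\|A\|t\le C\sqrt d\log n\cdot n$ and the Frobenius term gives $d\log n\sqrt n$; both are $O(n^{3/2}\log n)$.

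The $G\times G$ block is $(M_X\circ M_X)_{GG}$, a degree-4 chaos in $X_G$ alone, and we expect it to be the main obstacle. We would decouple it by Hoeffding decomposition of $(X_i^\top X_j)^2-d$, or iterate the conditioning: fix $X_j$ for $j$ in a random half and apply Hanson--Wright in the rest, combined with the same $\max$-entry trick and union bound over the net. This gives the $\log^2 n$ factor, with one logarithm from the entry truncation and one from the net/chaos tail. Finally we remove the conditioning on $\mathcal{E}$, handling the truncation bias in $\E H$ by the fact that it is $o(1)$ in expectation on $\mathcal{E}^c$ via Cauchy--Schwarz. Combining the diagonal bound, the mean-zero blocks, and the $G\times G$ block gives $\|H-\E H\|\le C_\gamma n^{3/2}\log^2 n$ with probability $1-o(1)$.
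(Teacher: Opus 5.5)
The genuine gap is the $G\times G$ off-diagonal block, $\sum_{i\ne j\in G}u_iu_j\bigl((X_i^\top X_j)^2-d\bigr)$. You correctly identify it as the main obstacle, but you only sketch it.

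The ``same max-entry trick'' does not transfer. In the $B\times B$ and mixed blocks it worked because the truncation event was $X$-measurable while the chaos lived in the independent Gaussian $Y_B$. Here the truncation and the chaos involve the same variables.

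Concretely, suppose you condition on a half $X_{G_2}$ and apply Hanson--Wright in $X_{G_1}$. The matrix is $\mathrm{diag}(u_iW)_{i\in G_1}$ with $W=\sum_{j\in G_2}u_jX_jX_j^\top$. Its operator norm $\max_i|u_i|\,\|W\|$ is of order $d$ for a localized direction such as $u=(e_1+e_2)/\sqrt2$ with $1\in G_1$, $2\in G_2$. At the tail level $e^{-cn}$ needed to beat the $9^n$ net, this gives only $dn\asymp n^2$.

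This is not slack in Hanson--Wright. For that $u$ the form equals $(X_1^\top X_2)^2-d$, whose $e^{-cn}$-quantile really is of order $dn$. So any per-direction net argument for this block must do one of two things, and you carry out neither:
\begin{itemize}
\item truncate $X_G$ itself, which destroys the conditional Gaussian structure you rely on and introduces a bias you would have to control; or
\item treat localized and delocalized directions separately.
\end{itemize}
Your claim that this ``gives the $\log^2 n$ factor'' is therefore unsupported. The Hoeffding decomposition does not remove the difficulty either. After subtracting the first-order terms $(\|X_i\|^2-d)+(\|X_j\|^2-d)$, what remains is the off-diagonal part of the Gram matrix of the vectors $\mathrm{vec}(X_iX_i^\top-I_d)\in\reals^{d^2}$. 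Bounding its norm by $O(n^{3/2})$ is precisely the nontrivial content.

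The paper does not use a net for this block. It writes the off-diagonal part of $E_G$ as $d^{3/2}K(X)$, where $K(X)_{ij}=d^{-1/2}k(X_i^\top X_j/\sqrt d)$ and $k(x)=x^2-1$. It then invokes a known spectral-norm bound $\|K(X)\|\le C_\gamma$ for inner-product kernel random matrices with $d/n=\gamma$, which gives $\|\tilde E_G\|\le C_\gamma d^{3/2}$. That external input, or a genuine substitute, is the missing idea.

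The rest of your argument is sound and differs usefully from the paper. For the $B\times B$ and $G\times B$ blocks, you condition on $X$ and apply Hanson--Wright with $\|D_uAD_u\|,\ \|D_uAD_u\|_F\le\max_{i\ne j}|A_{ij}|$. This is cleaner than the paper's Gaussian-interpolation lemmas and even gives $n^{3/2}\log n$. Two fixes are needed:
\begin{itemize}
\item Make the linear term in $Y_B$ explicit. Its conditional variance is at most $4\|X_G\|^2\max_{i\ne j}|X_i^\top X_j|^2\lesssim nd\log^2 n$.
\item Include two-sided bounds on $\|X_i\|^2$ and $\|Y_i\|^2$ in $\mathcal E$, which the diagonal bound needs.
\end{itemize}
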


\begin{proof}
    See Appendix \ref{appendix:theo_eig}.
\end{proof}

This perturbation bound is small compared with the population eigengap. Indeed, the leading two eigenvalues of $\E[H]$ are $\lambda_1=d^2+d(rn+1)$ and $\lambda_2=d^2+d$, so $\lambda_1-\lambda_2=drn$, which is of order $n^2$ under $d/n=\gamma$. Hence, the leading eigenvector is stable under perturbation.

Applying the Davis--Kahan theorem \cite{Davis_Kahan} yields an $\ell_2$ bound for the leading eigenvector. This can then be translated into classification error bounds by thresholding the empirical eigenvector at the level $tn^{-1/2}$.

\begin{corollary} \label{cor:l2}
    Suppose Assumption \ref{aspt:setup} holds and $H$ is defined as in Equation \eqref{equation:H}. Let $n,d\to\infty$ with $d/n=\gamma\in(0,\infty)$. Denote by $v$ the leading eigenvector of $\E[H]$, and by $\vhat$ the leading eigenvector of $H$. Then, with high probability $1-o(1)$,
    \begin{equation*}
        \|\vhat-v\| \leq C_\gamma n^{-1/2}\log^2 n.
    \end{equation*}
    Furthermore, for any $t\in(0,r^{-1/2})$, let $\Ghat=\{i\in[n]: \vhat_i\ge tn^{-1/2}\}$ and $\Bhat=\{i\in[n]: \vhat_i<tn^{-1/2}\}$. Then
    \begin{align*}
        \eG(\Ghat) &\le \frac{C_\gamma \log^4 n}{n(1-t\sqrt r)^2}, \\
        \eB(\Bhat) &\le \frac{C_\gamma \log^4 n}{nt^2(1-r)}, \\
        \eW(\Ghat,\Bhat) &\le \frac{C_\gamma \log^4 n}{n\min\{(r^{-1/2}-t)^2,t^2\}}
    \end{align*}
    with high probability.
\end{corollary}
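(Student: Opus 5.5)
The plan has two parts. First, combine Theorem~\ref{theo:mainEig} with the Davis--Kahan $\sin\Theta$ theorem to get the $\ell_2$ bound. Second, turn that $\ell_2$ bound into counts of misclassified coordinates by a Chebyshev-type (Markov) argument on the squared deviations.

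\textbf{Step 1 (the $\ell_2$ bound).} The population eigengap is $\lambda_1-\lambda_2 = drn = \gamma r n^2$. On the event of Theorem~\ref{theo:mainEig}, $\|H-\E[H]\|\le C_\gamma n^{3/2}\log^2 n$, which is $o(n^2)$. By Weyl's inequality the empirical top eigenvalue stays separated from the rest of the spectrum by at least half the gap for large $n$. Davis--Kahan then gives
\[
\min_{s\in\{\pm1\}}\|s\vhat - v\| \le \frac{2\sqrt2\,\|H-\E[H]\|}{\lambda_1-\lambda_2}\le \frac{C_\gamma n^{3/2}\log^2 n}{\gamma r n^2}.
\]
This is the claimed $C_\gamma n^{-1/2}\log^2 n$, with $r$ absorbed into the constant or kept explicit. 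We fix the sign of $\vhat$ so that $\vhat^\top v\ge 0$, which is the convention implicit in the statement.

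The sign choice needs justifying, and I expect this to be the main subtlety. An eigenvector is defined only up to sign, and the estimated sets depend on that sign. The natural convention is to orient $\vhat$ so that it has nonnegative inner product with $v$. Alternatively, one can orient it by $\sum_i \vhat_i>0$: since $v$ has nonnegative entries and $\|\vhat-v\|\to0$, this sum criterion agrees with the inner-product criterion with high probability, and it is data-driven. I would add a remark that Algorithm~\ref{alg:Eigmatching} implicitly fixes the sign this way.

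\textbf{Step 2 (error rates).} Write $\delta=\|\vhat-v\|$.

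For the inliers, take $i\in G\cap\Ghat^c$. Then $v_i=(rn)^{-1/2}$ and $\vhat_i<tn^{-1/2}$, so $|\vhat_i-v_i|> n^{-1/2}(r^{-1/2}-t)$. Summing squares over these indices gives
\[
|G\cap\Ghat^c|\, n^{-1}(r^{-1/2}-t)^2\le\delta^2,
\]
hence $|G\cap\Ghat^c|\le C_\gamma\log^4 n/(r^{-1/2}-t)^2$. Dividing by $|G|=rn$ and using $r(r^{-1/2}-t)^2=(1-t\sqrt r)^2$ gives the stated $\eG$ bound.

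For the outliers, take $i\in B\cap\Bhat^c$. Then $v_i=0$ and $\vhat_i\ge tn^{-1/2}$, so $|B\cap\Bhat^c|\,t^2/n\le\delta^2$. Dividing by $|B|=(1-r)n$ gives the $\eB$ bound.

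For the overall rate, add the two counts, bound each denominator below by $\min\{(r^{-1/2}-t)^2,t^2\}$, and divide by $n$.

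All bounds hold on the single high-probability event from Theorem~\ref{theo:mainEig}, so no additional union bound is needed. The remaining work is routine bookkeeping of constants, allowing $C_\gamma$ to absorb factors of $r$ as the statement does implicitly.
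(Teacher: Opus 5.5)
Your proposal is correct and follows the paper's proof essentially step for step: Davis--Kahan combined with Weyl's inequality turns the bound of Theorem~\ref{theo:mainEig} and the eigengap $\lambda_1-\lambda_2=drn$ into the $\ell_2$ bound, and the three error rates then follow deterministically by summing squared coordinate deviations over the misclassified indices in $G$ and in $B$. Your explicit handling of the sign ambiguity of $\vhat$ is a worthwhile addition that the paper leaves implicit; you orient $\vhat$ so that $\vhat^\top v\ge 0$, which with high probability is equivalent to $\one^\top\vhat>0$ since $|\one^\top(\vhat-v)|\le\sqrt n\,\|\vhat-v\|=o(\sqrt{rn})$.
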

\begin{rem}
    The condition $t\in(0,r^{-1/2})$ is exactly the population separation condition. Since $r\in(0,1)$, any fixed $t\in(0,1)$ is admissible. Moreover, the upper bound for the overall error depends on $\min\{(r^{-1/2}-t)^2,t^2\}$, so the bound is minimized at $t=1/(2\sqrt r)$. This is the optimizer of the upper bound, not necessarily of the true finite-sample error.
\end{rem}
\begin{proof}
    By the Davis--Kahan theorem \cite{Davis_Kahan},
    \begin{equation*}
        \|\vhat-v\|\leq \frac{\sqrt{2}\|H-\E[H]\|}{|\lambda_1-\lambdahat_2|},
    \end{equation*}
    where $\lambdahat_2$ is the second largest eigenvalue of $H$, and $\lambda_1$ is the largest eigenvalue of $\E[H]$. Let $\lambda_2$ be the second largest eigenvalue of $\E[H]$. By Weyl's inequality, $\lambdahat_2 \leq \lambda_2 + \|H-\E[H]\|$, hence
    \begin{equation*}
        |\lambda_1-\lambdahat_2| \geq \lambda_1-\lambda_2-\|H-\E[H]\|.
    \end{equation*}
    Note that $\lambda_1 = d^2 + d(rn+1)$ and $\lambda_2 = d^2 + d$, so $\lambda_1-\lambda_2 = drn$. By Theorem \ref{theo:mainEig}, with high probability,
    \[
        \|H-\E[H]\| \leq C_\gamma n^{3/2}\log^2 n.
    \]
    Therefore,
    \begin{equation*}
        \|\vhat-v\| \leq \frac{C_\gamma n^{3/2}\log^2 n}{drn - C_\gamma n^{3/2}\log^2 n}.
    \end{equation*}
    Since $d/n=\gamma\in(0,\infty)$ and $r\in(0,1)$ is fixed, the denominator is of order $n^2$. Hence,
    \begin{equation*}
        \|\vhat-v\| \leq C_\gamma n^{-1/2}\log^2 n
    \end{equation*}
    with high probability $1-o(1)$.

    We now derive the error rates. For $i\in G\cap\Ghat^c$, the empirical coordinate satisfies $\vhat_i<tn^{-1/2}$, while the population coordinate is $v_i=(rn)^{-1/2}$. Since $t<r^{-1/2}$, we have $(\vhat_i-v_i)^2\ge n^{-1}(r^{-1/2}-t)^2$. Summing over $i\in G\cap\Ghat^c$ gives
    \begin{equation*}
        \|\vhat-v\|^2 \geq \frac{|G\cap\Ghat^c|}{n}(r^{-1/2}-t)^2.
    \end{equation*}
    Since $|G|=rn$, it follows that
    \[
        \eG(\Ghat)=\frac{|G\cap\Ghat^c|}{|G|}\le \frac{\|\vhat-v\|^2}{(1-t\sqrt r)^2},
    \]
    and therefore
    \begin{equation*}
        \eG(\Ghat)\le \frac{C_\gamma \log^4 n}{n(1-t\sqrt r)^2}
    \end{equation*}
    with high probability.

    For $i\in B\cap\Bhat^c$, we have $v_i=0$ and $\vhat_i\ge tn^{-1/2}$, so $(\vhat_i-v_i)^2\ge n^{-1}t^2$. Summing over $i\in B\cap\Bhat^c$ yields $\|\vhat-v\|^2 \ge n^{-1}|B\cap\Bhat^c|t^2$. Since $|B|=(1-r)n$, we obtain
    \begin{equation*}
        \eB(\Bhat)\le \frac{C_\gamma \log^4 n}{nt^2(1-r)}
    \end{equation*}
    with high probability.

    Combining the two bounds, we have
    \begin{equation*}
        \|\vhat-v\|^2 \geq \frac{|G\cap\Ghat^c|+|B\cap\Bhat^c|}{n}\min\{(r^{-1/2}-t)^2,t^2\},
    \end{equation*}
    which implies
    \begin{equation*}
        \eW(\Ghat,\Bhat)\le \frac{C_\gamma \log^4 n}{n\min\{(r^{-1/2}-t)^2,t^2\}}.
    \end{equation*}
\end{proof}

\section{Row Sum Matching} \label{section:rsMatch}

\subsection{Matching Algorithm with Row-sum}
To recover the matched and mismatched points, another method to summarize the differences over blocks is to aggregate the entries of each row in $H$. Hence, we exploit the row-sum statistic of $\E[H]$ and utilize it for classification. 

Assume Assumption \ref{aspt:setup} holds and let the product matrix $H$ be defined as in Equation \eqref{equation:H}. For each $i\in[n]$, define the row-sum statistic
\begin{equation*}
    S_i = e_i^\top H \boldsymbol{1}.
\end{equation*}
Since $H$ measures the overlap between the two Gram matrices, the row sums aggregate the matching information across all indices and provide a natural alternative to the leading-eigenvector approach.

At the population level, the row sums satisfy
\begin{equation*}
    \E[S_i] =
    \begin{cases}
        d^2 + d(rn+1), & i\in G,\\
        d^2, & i\in B.
    \end{cases}
\end{equation*}
Hence, the shifted statistic $S_i-d^2$ has expectation $d(rn+1)$ on $G$ and $0$ on $B$. Therefore, exact separation is achieved at the population level by thresholding $S_i-d^2$ at any level $T\in(0,d(rn+1))$.

Based on this observation, we propose the row-sum matching procedure in Algorithm \ref{alg:RSmatching}. After constructing $H$, we compute the row sums $S_i$ and classify the indices either by thresholding $S_i-d^2$ at a level $T$, or by applying $k$-means with $K=2$ to $\{S_i-d^2:i\in[n]\}$. If prior information on $r$ is available, then a natural choice is to take $T$ near $d(rn+1)/2$, or slightly smaller to be conservative. Otherwise, we recommend using $k$-means as a data-driven alternative. In the numerical analysis, we focus on the $k$-means version.

As in Algorithm \ref{alg:Eigmatching}, the centering and normalization steps are important in practice. They stabilize the empirical scale of the data matrices and improve the subsequent classification based on the row sums.

\begin{algorithm}[!htb]
\caption{Row Sum Matching.}\label{alg:RSmatching}
\begin{algorithmic}[1]
    \Require Data matrix $X \in \reals^{d\times n}$; Data matrix $Y \in \reals^{d\times n}$; Optional threshold $T$.
    \State Center each row of $X$ and $Y$, i.e., $X\boldsymbol{1} = \boldsymbol{0}$ and $Y\boldsymbol{1} = \boldsymbol{0}$. Then, normalize each column of $X$ and $Y$ so that $X = \left\{ \frac{X_1}{\|X_1\|},\ldots,\frac{X_n}{\|X_n\|} \right\}$ and $Y = \left\{ \frac{Y_1}{\|Y_1\|},\ldots,\frac{Y_n}{\|Y_n\|} \right\}$.
    \State Calculate $H= \left( X^\top X \right) \circ \left( Y^\top Y \right)$.
    \State Compute the row sums $S_i = e_i^\top H \boldsymbol{1}$ for $i\in[n]$.
    \If{$T$ is given,}
        \State Classify the indices into two sets with $\hat{G} = \{i\in[n]: S_i-d^2 \geq T\}$ and $\hat{B} = \{i\in[n]: S_i-d^2 < T\}$.
    \Else
        \State Apply $k$-means to $\{S_i-d^2: i\in[n]\}$, treating each entry as a data point. Let $K = 2$. Assign the cluster with the larger centroid as $\hat{G}$ and the other cluster as $\hat{B}$.
    \EndIf
    \State \Return The estimated index sets $\hat{G}$ and $\hat{B}$.
\end{algorithmic}
\end{algorithm}

\subsection{Consistency Results of Row Sum Matching}

We now establish the theoretical guarantees for the row-sum statistic. Compared with the eigenvector matching method, the row-sum analysis improves in three aspects. First, the condition on the dimension is relaxed from $d\asymp n$ to $\log n \lesssim d \lesssim n^2/\log n$. This means that the method applies not only when the dimension and sample size are of the same order, but also in both lower-dimensional and higher-dimensional regimes. Second, the requirement on the inlier proportion $r$ is weaker. In particular, under the same scaling $d\asymp n$ as in the eigenvector analysis, the row-sum method still permits $r\to 0$. Third, the conclusion is stronger: instead of only showing that the error rates vanish, we prove exact recovery by controlling the numbers of false negatives and false positives directly.

The proof is based on the row-sum statistics $S_i=e_i^\top H\boldsymbol{1}$, $i\in[n]$. Since $\E[S_i]=d^2+d(rn+1)$ for $i\in G$ and $\E[S_i]=d^2$ for $i\in B$, the two groups are separated at the population level by a gap of size $d(rn+1)$. Therefore, exact recovery follows once the fluctuations of $S_i$ around $\E[S_i]$ are uniformly smaller than this gap. The main technical task is thus to control $|S_i-\E[S_i]|$ separately for $i\in G$ and $i\in B$. This is carried out in Theorem~\ref{theo:mainRS}, whose proof is given in Appendix~\ref{appendix:theo_rs}.

\begin{theorem}\label{theo:mainRS}
    Let $H$ be defined as in Equation \eqref{equation:H} and suppose Assumption \ref{aspt:setup} holds. Denote $S_i=e_i^\top H\boldsymbol{1}$. Let $n,d\to\infty$ with $\log n \lesssim d \lesssim n^2/\log n$. Then there exists $n_0$ such that for all $n>n_0$, the following statements hold:
    
    \begin{enumerate}
        \item[i)] For $i\in G$, with probability at least $1-9n^{-2}$,
        \begin{equation*}
            |S_i-\E[S_i]| \leq C_1\sqrt{d\log n}\left(d+\sqrt{dn}+rn\right).
        \end{equation*}
        
        \item[ii)] For $i\in B$, with probability at least $1-10n^{-2}$,
        \begin{equation*}
            |S_i-\E[S_i]| \leq C_2 d\sqrt{\log n}\left(\sqrt d+\sqrt n\right).
        \end{equation*}
    \end{enumerate}
\end{theorem}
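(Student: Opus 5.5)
The plan is to condition on the $i$-th column and reduce each row sum to a quadratic form in the remaining Gaussian vectors. Write $S_i=H_{ii}+\sum_{j\ne i}(X_i^\top X_j)(Y_i^\top Y_j)$, and split the off-diagonal sum according to $j\in G$ or $j\in B$. The diagonal term $H_{ii}$ is a product of two chi-square variables with $d$ degrees of freedom. Laplace/Bernstein bounds give $\big|\|X_i\|^2-d\big|\le C\sqrt{d\log n}$ with probability $1-2n^{-2}$, so $|H_{ii}-\E H_{ii}|\le C d^{3/2}\sqrt{\log n}$. This accounts for the $d\sqrt{d\log n}$ term in both i) and ii); the assumption $d\gtrsim\log n$ keeps $\sqrt{d\log n}\lesssim d$ in these bounds.

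For $i\in G$, conditionally on $X_i$ we have $Y_i=RX_i$. For $j\in G$ the summand is $(X_i^\top X_j)^2$, and for $j\in B$ it is $(X_i^\top X_j)(X_i^\top R^\top Y_j)$.

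\emph{Inlier part.} Given $X_i$, the variables $g_j:=X_i^\top X_j/\|X_i\|$ are i.i.d.\ $\mathcal N(0,1)$, so $\sum_{j\in G\setminus\{i\}}(X_i^\top X_j)^2=\|X_i\|^2\chi^2_{rn-1}$. Laplace's bound gives $\chi^2_{rn-1}=rn+O(\sqrt{rn\log n}+\log n)$. Combining this with the deviation of $\|X_i\|^2$, the fluctuation is at most $C(d\sqrt{rn\log n}+d\log n+rn\sqrt{d\log n})$. Since $d\sqrt{rn\log n}\le\sqrt{d\log n}\,\sqrt{dn}$ and $\log n\lesssim d$, this fits inside $C_1\sqrt{d\log n}(d+\sqrt{dn}+rn)$.

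\emph{Outlier part.} Given $X_i$, the terms $g_j h_j$ with $h_j=X_i^\top R^\top Y_j/\|X_i\|\sim\mathcal N(0,1)$ are i.i.d., mean zero, and sub-exponential, and $g_j,h_j$ are independent. Bernstein's inequality gives $|\sum_{j\in B}g_jh_j|\le C(\sqrt{n\log n}+\log n)$. Multiplying by $\|X_i\|^2\le 2d$ yields $Cd\sqrt{n\log n}\le C\sqrt{d\log n}\sqrt{dn}$. A union of these events ($\chi^2$ for $\|X_i\|^2$, $\chi^2_{rn-1}$, Bernstein) gives the probability $1-9n^{-2}$ after a suitable choice of constants.

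For $i\in B$, $X_i$ and $Y_i$ are independent. Conditioning on both, each term $j\ne i$ is $(X_i^\top X_j)(Y_i^\top Y_j)$.
\begin{itemize}
\item For $j\in B$, $X_j$ and $Y_j$ are independent, so the term is $\|X_i\|\|Y_i\|g_jh_j$ with independent standard normals, handled exactly as the outlier part above, giving $Cd\sqrt{n\log n}$.
\item For $j\in G$, $Y_j=RX_j$, so the term is $X_j^\top X_iY_i^\top R X_j=X_j^\top A X_j$ with $A=X_iY_i^\top R$, a rank-one matrix. The sum over $j\in G$ is then a Gaussian quadratic form $Z^\top(I_{|G|}\otimes A)Z$. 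Hanson--Wright gives deviation about $\mathrm{tr}$ at most $C(\|I\otimes A\|_F\sqrt{\log n}+\|A\|\log n)$, with $\|I\otimes A\|_F=\sqrt{rn}\,\|X_i\|\|Y_i\|\lesssim d\sqrt n$ and $\|A\|\lesssim d$.
\item The trace term $rn\,Y_i^\top RX_i$ is not zero conditionally. It is $rn$ times $O(\sqrt{d\log n})$ with probability $1-2n^{-2}$, giving $rn\sqrt{d\log n}$.
\end{itemize}
This trace term does not appear in ii), so I must handle it by folding it into the centering: unconditionally, $\E$ of the $j\in G$ sum is $0$, and the realized deviation includes $rn\,Y_i^\top R X_i$. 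I would bound $rn\sqrt{d\log n}$ using $rn\le n$, which gives $n\sqrt{d\log n}$. I would need this to be $\lesssim d\sqrt{n\log n}$, which requires $d\gtrsim n$, so it is not automatic. I expect this to be the main obstacle. On closer inspection, symmetry suggests a better route. Write $X_j^\top X_iY_i^\top RX_j$ with $X_i,Y_i$ independent Gaussians, and integrate out $X_i,Y_i$ last: conditionally on $\{X_j\}_{j\in G}$ the sum equals $X_i^\top M Y_i'$ with $M=\sum_{j\in G}X_jX_j^\top$ and $Y_i'=R^\top Y_i$. This bilinear Gaussian form has deviation $C(\|M\|_F\sqrt{\log n}+\|M\|\log n)$. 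Since $\|M\|_F\lesssim\sqrt d\,(\sqrt d+\sqrt{rn})^2/\sqrt{\cdot}$, roughly $\|M\|_F\lesssim \sqrt{d}\,rn+d\sqrt{rn}$, the $\sqrt d\,rn$ piece is still there. So I would present this bilinear-form argument and track constants carefully, checking whether the stated bound ii) absorbs it under $d\gtrsim\log n$ or whether an extra $rn\sqrt{d\log n}$ term is implicitly allowed. The remaining events ($\chi^2$ bounds for $\|X_i\|,\|Y_i\|$, an operator norm bound for $M$, Hanson--Wright, Bernstein, diagonal) give the $10n^{-2}$ probability.
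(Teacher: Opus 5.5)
Your treatment of the diagonal terms, of case i), and of the $j\in B$ sum in case ii) is correct and is essentially the paper's argument. The paper uses a sub-exponential Bernstein bound for the inlier sum where you use the exact representation $\|X_i\|^2\chi^2_{rn-1}$, which is only a cosmetic difference.

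The proposal does not prove ii): you isolate the term $rn\,X_i^\top R^\top Y_i$ and leave open whether it is absorbed. It is not absorbed, and better bookkeeping will not help, because ii) is false in part of the stated regime.

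For $i\in B$, condition on $(X_i,Y_i)$. The off-diagonal part of $S_i$ then has mean $rn\,X_i^\top R^\top Y_i$ and variance $|G|\bigl(\|X_i\|^2\|Y_i\|^2+(X_i^\top R^\top Y_i)^2\bigr)+(|B|-1)\|X_i\|^2\|Y_i\|^2=O(nd^2)$. Meanwhile $|H_{ii}-d^2|\le Cd^{3/2}\sqrt{\log n}$ with high probability. Since $X_i^\top R^\top Y_i\mid X_i\sim\mathcal N(0,\|X_i\|^2)$, the conditional mean exceeds $c\,rn\sqrt d$ in absolute value with probability bounded away from zero. By Chebyshev, $|S_i-\E S_i|\gtrsim rn\sqrt d$ with constant probability as soon as $rn\gg\sqrt{\log n}\,(d+\sqrt{dn})$.

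The right-hand side of ii) equals $C_2\sqrt{d\log n}\,(d+\sqrt{dn})$, so ii) fails, for example, with $r$ fixed and $d\asymp\log^2 n$, which lies inside $\log n\lesssim d\lesssim n^2/\log n$. Bound ii) holds only when $rn\lesssim\sqrt{\log n}\,(d+\sqrt{dn})$, e.g.\ when $d\gtrsim n$; this is the condition you found.

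The paper's proof reaches ii) by asserting that, given $X_i$ and $Y_i$, the variables $\frac{X_i^\top X_j}{\|X_i\|}\frac{Y_i^\top RX_j}{\|Y_i\|}$, $j\in G$, have mean zero. Their conditional mean is actually $X_i^\top R^\top Y_i/(\|X_i\|\|Y_i\|)\neq 0$. So Bernstein is applied to uncentered summands, and exactly your trace term is lost.

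So do not try to close the gap as stated. What is true is the corrected bound $|S_i-\E S_i|\le C\sqrt{d\log n}\,(d+\sqrt{dn}+rn)$ for $i\in B$, which has the same shape as i). Your first (Hanson--Wright) route already proves it:
the deviation of $Z^\top(I_{|G|}\otimes A)Z$ about its trace is $O(d\sqrt{rn\log n}+d\log n)$, and $rn\,|Y_i^\top RX_i|\le C\,rn\sqrt{d\log n}$ with probability $1-O(n^{-2})$.

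The bilinear route gives the same result once you replace your garbled intermediate estimate by $\|M\|_F\le rn\sqrt d+\sqrt d\,\|M-rnI\|\lesssim rn\sqrt d+d\sqrt{rn}+d^{3/2}$, up to logarithmic factors.

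The correction only raises the lower end of the threshold interval in Corollary~\ref{cor:RS_rate} to $C\sqrt{d\log n}\,(d+\sqrt{dn}+rn)$. Non-emptiness then requires $d\ge C\log n$ with a large enough constant and $rn\gtrsim\sqrt{d\log n}+\sqrt{n\log n}$. That is the order condition the paper already states, so the exact-recovery conclusion survives.
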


Theorem \ref{theo:mainRS} shows that the row-sum analysis is valid throughout the regime $\log n \lesssim d \lesssim n^2/\log n$. In particular, it covers both the low-dimensional case $d\asymp \log n$ and the high-dimensional case $d\asymp n^2/\log n$, whereas the eigenvector analysis requires $d\asymp n$.

With these perturbation bounds in hand, we can identify a threshold $T$ that separates the two groups exactly. Indeed, if $T$ lies between the upper fluctuation level of the $B$-indexed row sums and the lower fluctuation level of the $G$-indexed row sums, then every index is classified correctly. This is formalized in Corollary~\ref{cor:RS_rate}.

\begin{corollary}\label{cor:RS_rate}
    Let $H$ be defined as in Equation \eqref{equation:H} and suppose Assumption \ref{aspt:setup} holds. Denote $S_i=e_i^\top H\boldsymbol{1}$. Let $n,d\to\infty$ with $\log n \lesssim d \lesssim n^2/\log n$. Define
    \[
        \Ghat=\{i\in[n]: S_i-d^2\ge T\}, \qquad
        \Bhat=\{i\in[n]: S_i-d^2<T\}.
    \]
    If $T$ is chosen in the interval
    \begin{equation}\label{equation:condT}
        C_2 d\sqrt{\log n}\left(\sqrt d+\sqrt n\right)
        <
        T
        <
        d(rn+1)-C_1\sqrt{d\log n}\left(d+\sqrt{dn}+rn\right),
    \end{equation}
    then there exists $n_0$ such that for all $n>n_0$, with probability at least $1-19n^{-1}$, the number of errors is zero, that is,
    \[
        |G\cap \Ghat^c|=0, \qquad |B\cap \Bhat^c|=0.
    \]
\end{corollary}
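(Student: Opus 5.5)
The plan is to reduce Corollary~\ref{cor:RS_rate} directly to Theorem~\ref{theo:mainRS} through a union bound over all $n$ indices, followed by a deterministic check that the threshold $T$ separates the two groups on the resulting high-probability event. Let $\mathcal{E}_i$ be the event from Theorem~\ref{theo:mainRS}: part i) for $i\in G$, and part ii) for $i\in B$. For $n>n_0$, where $n_0$ is the one supplied by Theorem~\ref{theo:mainRS}, each $\mathcal{E}_i^c$ has probability at most $10n^{-2}$. Summing over $i\in[n]$ gives
\[
\bbP\Big(\bigcup_{i\in[n]}\mathcal{E}_i^c\Big)\le 9|G|n^{-2}+10|B|n^{-2}\le 19n^{-1}.
\]
So $\mathcal{E}:=\bigcap_i\mathcal{E}_i$ holds with probability at least $1-19n^{-1}$. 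The constant $19$ is simply $9+10$, which is crude but matches the statement.

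On $\mathcal{E}$, I would verify the classification index by index, using the population row sums computed earlier: $\E[S_i]=d^2+d(rn+1)$ for $i\in G$ and $\E[S_i]=d^2$ for $i\in B$.

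For $i\in G$, part i) gives
\[
S_i-d^2\ge d(rn+1)-C_1\sqrt{d\log n}\,(d+\sqrt{dn}+rn)>T,
\]
where the last step uses the upper end of \eqref{equation:condT}. Hence $i\in\Ghat$, and therefore $|G\cap\Ghat^c|=0$.

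For $i\in B$, part ii) gives
\[
S_i-d^2\le C_2 d\sqrt{\log n}\,(\sqrt d+\sqrt n)<T,
\]
using the lower end of \eqref{equation:condT}. Hence $i\in\Bhat$, and therefore $|B\cap\Bhat^c|=0$.

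There is no real analytic obstacle here, since all the probabilistic work is contained in Theorem~\ref{theo:mainRS}. The one point to handle carefully is that the statement is conditional on $T$ satisfying \eqref{equation:condT}. That interval may be empty in some regimes, but then the hypothesis is vacuous and nothing needs proving. It may still be worth remarking that the interval is nonempty once $rn\gg\sqrt{\log n}\,(\sqrt d+\sqrt n+\sqrt{n\log n/d})$ and similar conditions hold. For example, if $d\asymp n$ this requires $rn\gtrsim\sqrt{n}\,\mathrm{polylog}(n)$, which gives the $\sqrt n$ inlier count claimed in the introduction.

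The second point to handle carefully is that the bounds in Theorem~\ref{theo:mainRS} hold for each fixed $i$ with $n_0$ uniform in $i$. This uniformity is what makes the union bound legitimate, and it follows because the constants $C_1,C_2$ and $n_0$ in the theorem do not depend on $i$.
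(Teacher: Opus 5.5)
Your proposal is correct and follows the paper's proof essentially step for step. Both use a union bound over $G$ and $B$ based on the per-index deviation bounds of Theorem~\ref{theo:mainRS}, and then observe deterministically that on the resulting event every shifted row sum $S_i-d^2$ lies on the correct side of $T$. Your count $9|G|n^{-2}+10|B|n^{-2}$ actually gives the slightly sharper bound $10n^{-1}$; the paper's $19n^{-1}$ comes from bounding the two groups separately by $9n^{-1}$ and $10n^{-1}$.
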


\begin{proof}
    By Theorem \ref{theo:mainRS}, there exists $n_0$ such that for all $n>n_0$,
    \[
        P\!\left(|S_i-\E[S_i]|\ge C_1\sqrt{d\log n}\left(d+\sqrt{dn}+rn\right)\right)\le 9n^{-2},
        \qquad i\in G,
    \]
    and
    \[
        P\!\left(|S_i-\E[S_i]|\ge C_2 d\sqrt{\log n}\left(\sqrt d+\sqrt n\right)\right)\le 10n^{-2},
        \qquad i\in B.
    \]

    Applying the union bound over all $i\in G$ yields
    \[
        P\!\left(\max_{i\in G}|S_i-\E[S_i]|\ge C_1\sqrt{d\log n}\left(d+\sqrt{dn}+rn\right)\right)\le 9n^{-1},
    \]
    and similarly, applying the union bound over all $i\in B$ gives
    \[
        P\!\left(\max_{i\in B}|S_i-\E[S_i]|\ge C_2 d\sqrt{\log n}\left(\sqrt d+\sqrt n\right)\right)\le 10n^{-1}.
    \]

    Choose $T$ in the stated interval. Then, with probability at least $1-19n^{-1}$, we have simultaneously
    \[
        \min_{i\in G}(S_i-d^2-T)\ge 0
        \qquad\text{and}\qquad
        \min_{i\in B}(T-S_i+d^2)> 0.
    \]
    Hence, every $G$-indexed row sum lies above the threshold and every $B$-indexed row sum lies below it. Therefore,
    \[
        |G\cap \Ghat^c|=0, \qquad |B\cap \Bhat^c|=0,
    \]
    which proves the claim.
\end{proof}

To ensure that the admissible interval for $T$ is nonempty, the signal term $d(rn+1)$ must dominate the fluctuation terms. This imposes an implicit lower bound on the inlier proportion $r$ as a function of both $d$ and $n$. At the level of order, this condition can be summarized as $rn \gtrsim \sqrt{d\log n}+\sqrt{n\log n}$.

This condition has a clear interpretation in different dimensional regimes. If $d\asymp n$, which is the same scaling condition as in the eigenvector matching analysis, then the above requirement reduces to $rn \gtrsim \sqrt{n\log n}$. In particular, the inlier proportion $r$ is allowed to vanish with $n$, and the number of inliers $|G|=rn$ can be as small as order $\sqrt n$, up to logarithmic factors. By contrast, if $d$ is very large, for example near the upper end $d\asymp n^2/\log n$, then $\sqrt{d\log n}\asymp n$, and the condition becomes $rn\gtrsim n$. Therefore, in this challenging high-dimensional regime, the inlier proportion $r = \Omega(1)$.

\begin{rem}
    Corollary 2 yields an exact recovery result under the broad dimensional regime $\log n \lesssim d \lesssim n^2/\log n$. In practice, when the threshold $T$ is unavailable, Algorithm \ref{alg:RSmatching} uses $k$-means as a data-driven alternative.
\end{rem}

\section{Parallel Computing}\label{section:parallel}

When the sample size $n$ is large, both the construction of the product matrix $H$ and the subsequent classification step can become computationally expensive, in terms of both memory usage and running time. This is particularly pronounced for the eigenvector method, which requires an eigenvalue decomposition of an $n\times n$ matrix, and also for the row-sum method, whose main computation is quadratic in $n$. Since both methods are based on pairwise comparisons within the sample, a natural way to reduce the computational burden is to split the full data into several smaller subsamples, process them in parallel, and then aggregate the estimated labels.

Consider first the eigenvector matching method. Suppose the full sample is split into $s$ subsamples of size roughly $m$, so that $n\approx ms$. For the subsample problem, the same eigenvector analysis applies as long as $m$ and $d$ remain of the same order. In this setting, the computational cost of a full eigenvalue decomposition is reduced from $O(n^3)$ to $O(sm^3)$. If the leading eigenvector is computed by an iterative method with $k$ iterations, then the cost is reduced from $O(kn^2)$ to $O(skm^2)$.

The same strategy applies to the row-sum method. After splitting the data into $s$ subsamples, we compute the row sums within each subsample and classify the indices locally. The row-sum computation is quadratic in the subsample size, so the total cost decreases from $O(n^2)$ to $O(sm^2)$. For the threshold-based theoretical guarantee, the subsample size $m$ should continue to satisfy the corresponding condition in Corollary \ref{cor:RS_rate}, so that the admissible interval for the threshold remains nonempty.

This parallel computing strategy is summarized in Algorithm \ref{alg:parallel}. In Section \ref{section:numerical}, we illustrate through numerical simulations that Algorithm \ref{alg:parallel} significantly decreases computational time while maintaining strong performance for both matching methods.

\begin{algorithm}[htb]
\caption{Matching via Parallel Computing.}\label{alg:parallel}
\begin{algorithmic}[1]
    \Require Data matrix $X \in \reals^{d\times n}$; Data matrix $Y \in \reals^{d\times n}$; Number of splits $s$; Method `algo'; Optional threshold parameter.
    \State Center each row of $X$ and $Y$, i.e. $X\boldsymbol{1} = \boldsymbol{0}$ and $Y\boldsymbol{1} = \boldsymbol{0}$. Then, normalize each column of $X$ and $Y$ so that $X = \left\{ \frac{X_1}{\|X_1\|},\ldots,\frac{X_n}{\|X_n\|} \right\}$ and $Y = \left\{ \frac{Y_1}{\|Y_1\|},\ldots,\frac{Y_n}{\|Y_n\|} \right\}$.
    \State Randomly split $X$ into $s$ subsamples with similar sizes. Denote the $j$-th subsample as $X^{(j)} \in \reals^{d\times n_j}$. Let $Y^{(j)} \in \reals^{d\times n_j}$ be the corresponding subsample of $Y$.
    \If{`algo' == `Eig'}
        \State For each server $1\le j\le s$, send $X^{(j)}$, $Y^{(j)}$, and the optional threshold $t$ to server $j$, and apply Algorithm \ref{alg:Eigmatching} to obtain $\hat G_j$ and $\hat B_j$.
    \ElsIf{`algo' == `RS'}
        \State For each server $1\le j\le s$, send $X^{(j)}$, $Y^{(j)}$, and the optional threshold $T$ to server $j$, and apply Algorithm \ref{alg:RSmatching} to obtain $\hat G_j$ and $\hat B_j$.
    \EndIf
    \State Set $\hat G = \varnothing$ and $\hat B = \varnothing$.
    \For{$j=1$ to $s$}
        \State $\hat G \leftarrow \{\hat G,\hat G_j\}$
        \State $\hat B \leftarrow \{\hat B,\hat B_j\}$
    \EndFor
    \State \Return The estimated index sets $\hat G$ and $\hat B$.
\end{algorithmic}
\end{algorithm}

\section{Numerical Results}\label{section:numerical}

In this section, we report three numerical experiments. The first two are based on the brain dataset and are designed to study statistical sensitivity to the inlier proportion and computational scalability under parallelization, respectively. The third uses the sky dataset to examine sensitivity to noise.

\subsection{Brain Data: Sensitivity to Inlier Proportion}\label{subsec:expt_ratio}

\noindent We first examine how the two matching methods respond to the amount of matching structure in the data. This experiment is based on the brain dataset and studies the sensitivity of the classification performance to the inlier proportion $r=|G|/n$.

We use brain imaging data from the Human Connectome Project. Details of the data acquisition and pre-processing are described in \cite{Brain_overview, Brain_diffusion, Brain_preprocessing1, Brain_preprocessing2, Brain_preprocessing3, Brain_preprocessing4, Brain_myelin, Brain_general}. The dataset is taken from the Connectome Workbench v1.5 Tutorial. Our analysis focuses on the left hemisphere, which contains $32{,}488$ vertices. For each vertex, we use six features: a three-dimensional spatial coordinate ({\it midthickness}), cortical thickness ({\it corrThickness}), curvature ({\it curvature}), and a proxy for myelin content ({\it MyelinMap}). We further pre-process the data by removing outliers in {\it MyelinMap}; see Appendix~\ref{appendix:Brain} for details.

To generate the matching problem, we take the data matrix $X \in \reals^{d \times n}$ from $n$ sampled vertices in the brain image data. Let $G$ denote the correctly matched set and $B$ the mismatched set. Given an orthogonal matrix $R \in \reals^{d \times d}$, we construct $Y$ by setting $Y_i = RX_i$ for $i \in G$, while for $i \in B$ we set $Y_i = RX_{\pi(i)}$ for an arbitrary permutation $\pi: B \to B$. Hence the two data matrices contain the same observations up to rotation, and the mismatch is created entirely by incorrect pairing on the set $B$. We fix $n=2000$ and vary the inlier proportion over $r \in \{0.55, 0.60, \ldots, 0.90\}$.

We compare five classification methods: Eigenvector Matching (Algorithm~\ref{alg:Eigmatching}) with thresholds $t \in \{0.3, 0.5, 0.7\}$, Eigenvector Matching with $k$-means, and Row Sum Matching (Algorithm~\ref{alg:RSmatching}) with $k$-means. For each value of $r$, we repeat the experiment 1000 times and report the average error rates in Figure~\ref{fig:Experiment1_RatioChange}.

As $r$ increases, all methods achieve lower error rates on $G$, $B$, and the whole dataset. This is consistent with the theoretical picture: a larger inlier proportion enlarges the signal gap and makes the matching problem easier. For a fixed $r$, a larger threshold in the eigenvector method decreases the error on $G$ but increases the error on $B$, reflecting the trade-off already suggested by the theoretical analysis.

For the overall error rate, the choice $t=0.5$ performs best among the three fixed thresholds. The data-driven $k$-means version of the eigenvector method performs similarly to the case $t=0.3$, and it gives the best performance when $r=0.9$. The row-sum method with $k$-means performs comparably to the eigenvector method with $k$-means throughout the experiment. Overall, these results show that both methods are sensitive to the amount of matching structure in a predictable way, and that the $k$-means versions provide competitive performance without requiring manual threshold tuning.

\begin{figure}[!t]
    \centering
    \includegraphics[width=\textwidth]{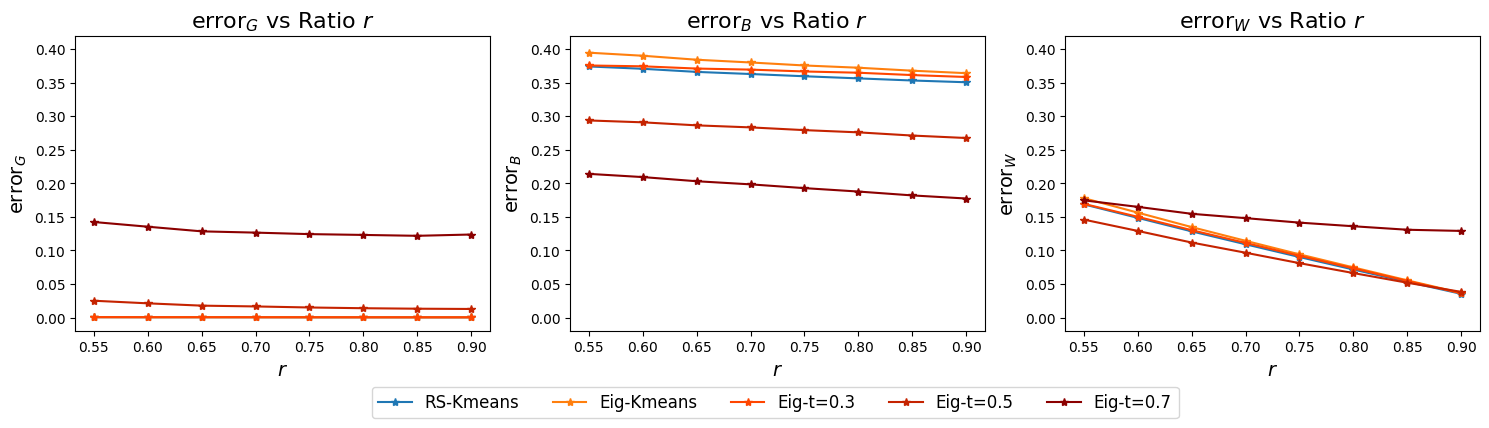}
    \caption{Performance across different inlier proportions on the brain dataset, with $n=2000$ and $r \in \{0.55, 0.6, 0.65, 0.7, 0.75, 0.8, 0.85, 0.9\}$. We compare Eigenvector Matching (Algorithm~\ref{alg:Eigmatching}) with $k$-means and thresholds $t \in \{0.3, 0.5, 0.7\}$, and Row Sum Matching (Algorithm~\ref{alg:RSmatching}) with $k$-means. Left panel: $\eG$ versus $r$; middle panel: $\eB$ versus $r$; right panel: $\eW$ versus $r$.}
    \label{fig:Experiment1_RatioChange}
\end{figure}

\subsection{Brain Data: Parallel Computing Performance}\label{subsec:expt_parallel}

\noindent We next examine the computational performance of the parallel computing strategy proposed in Section~\ref{section:parallel}. This experiment is again based on the brain dataset and is designed to evaluate how parallelization affects both running time and classification accuracy.

The data generation mechanism is the same as in Section~\ref{subsec:expt_ratio}. We use the brain data to construct a matching problem with an inlier proportion fixed at $r=0.8$, and we take $n=12000$. We then split the full sample into $s \in \{1,2,3,4,6,12\}$ subsamples and apply the parallel version of the algorithm in each case. For the eigenvector method, we use the $k$-means version of Algorithm~\ref{alg:Eigmatching}; for the row-sum method, we use the $k$-means version of Algorithm~\ref{alg:RSmatching}. The computations are carried out on a 13th-generation Intel Core i9-13900HX processor.

The average error rates and total computation times over 1000 repetitions are summarized in Table~\ref{table:Eig} for the eigenvector method and Table~\ref{table:RS} for the row-sum method. To keep the presentation concise, we report only the mean and standard deviation of $\eW$; the behaviors of $\eG$ and $\eB$ are similar.

For both methods, the error rates remain essentially unchanged as the number of splits increases. This indicates that the parallel strategy preserves the statistical performance of the original procedures. At the same time, the running time decreases substantially when the data are split into a moderate number of subsamples. In particular, moving from no split to two splits reduces the running time by about $30\%$ for both methods.

The computational gain becomes less pronounced when the number of splits is further increased. For example, the reduction in running time from six splits to twelve splits is relatively small. This is expected: although smaller subsamples reduce the cost of matrix construction, the classification step still incurs a non-negligible overhead. Overall, the experiment confirms that parallelization provides a practical improvement in computation while maintaining stable classification accuracy.

\begin{table}
    \caption{Running time and performance with Eigenvector Matching.\label{table:Eig}}
    \centering
    \begin{tabular}{ |c|c|c| }
        \hline
        No. of Split & Running Time & $\eW$ \\
        \hline
        no split: $n_s=12000$ & 2922s & 0.075 (0.002) \\
        2 split: $n_s=6000$ & 1949s & 0.075 (0.002) \\
        3 split: $n_s=4000$ & 1787s & 0.075 (0.002) \\
        4 split: $n_s=3000$ & 1701s & 0.075 (0.002) \\
        6 split: $n_s=2000$ & 1621s & 0.075 (0.002) \\
        12 split: $n_s=1000$ & 1560s & 0.075 (0.002) \\
        \hline
    \end{tabular}
\end{table}

\begin{table}
    \caption{Running time and performance with Row Sum Matching.\label{table:RS}}
    \centering
    \begin{tabular}{ |c|c|c| }
        \hline
        No. of Split & Running Time & $\eW$ \\
        \hline
        no split: $n_s=12000$ & 2701s & 0.072 (0.002) \\
        2 split: $n_s=6000$ & 1846s & 0.072 (0.002) \\
        3 split: $n_s=4000$ & 1693s & 0.072 (0.002) \\
        4 split: $n_s=3000$ & 1635s & 0.072 (0.002) \\
        6 split: $n_s=2000$ & 1577s & 0.072 (0.002) \\
        12 split: $n_s=1000$ & 1499s & 0.072 (0.002) \\
        \hline
    \end{tabular}
\end{table}

\subsection{Sky Data: Sensitivity to Noise}\label{subsection:sky}

\noindent We next examine how the two matching methods respond to increasing noise. This experiment is based on the sky dataset and is designed to study the sensitivity of the classification performance to the noise level.

We use the RGB values of pixels from a sky image to construct the data matrices. Each pixel is represented by a three-dimensional vector corresponding to its red, green, and blue values. We consider two perturbation scenarios:
\begin{itemize}
    \item Scenario 1: Two interchanged areas. Select two $900\times900$ squares on the original sky photo and interchange the RGB values of the selected two regions. The interchanged areas take $25\%$ of the total pixels.
    \item Scenario 2: Multiple permuted areas. Select 30 non-overlapping $200\times200$ squares on the original sky photo, and name the selected areas $S_1,\ldots,S_{30}$. The selected $200\times200$ blocks take $18.5\%$ of the total pixels.
\end{itemize}
In this section, we treat the original image as the data matrix $X$ of size $3\times6489600$ and the interchanged (or permuted) photo contaminated by random noise as the data matrix $Y$. The columns of the noise matrix follow a multivariate Gaussian distribution $\mathcal{N}(\zero,\sigma^2 I)$. We visualize the Scenario 1 and 2 sky photos in Figure~\ref{fig:Experiment3_Sky}. The photos are displayed on the $130\times195$ scale for better visualisation of the noise effect.

\begin{figure}[!t]
    \centering
    \includegraphics[width=\linewidth]{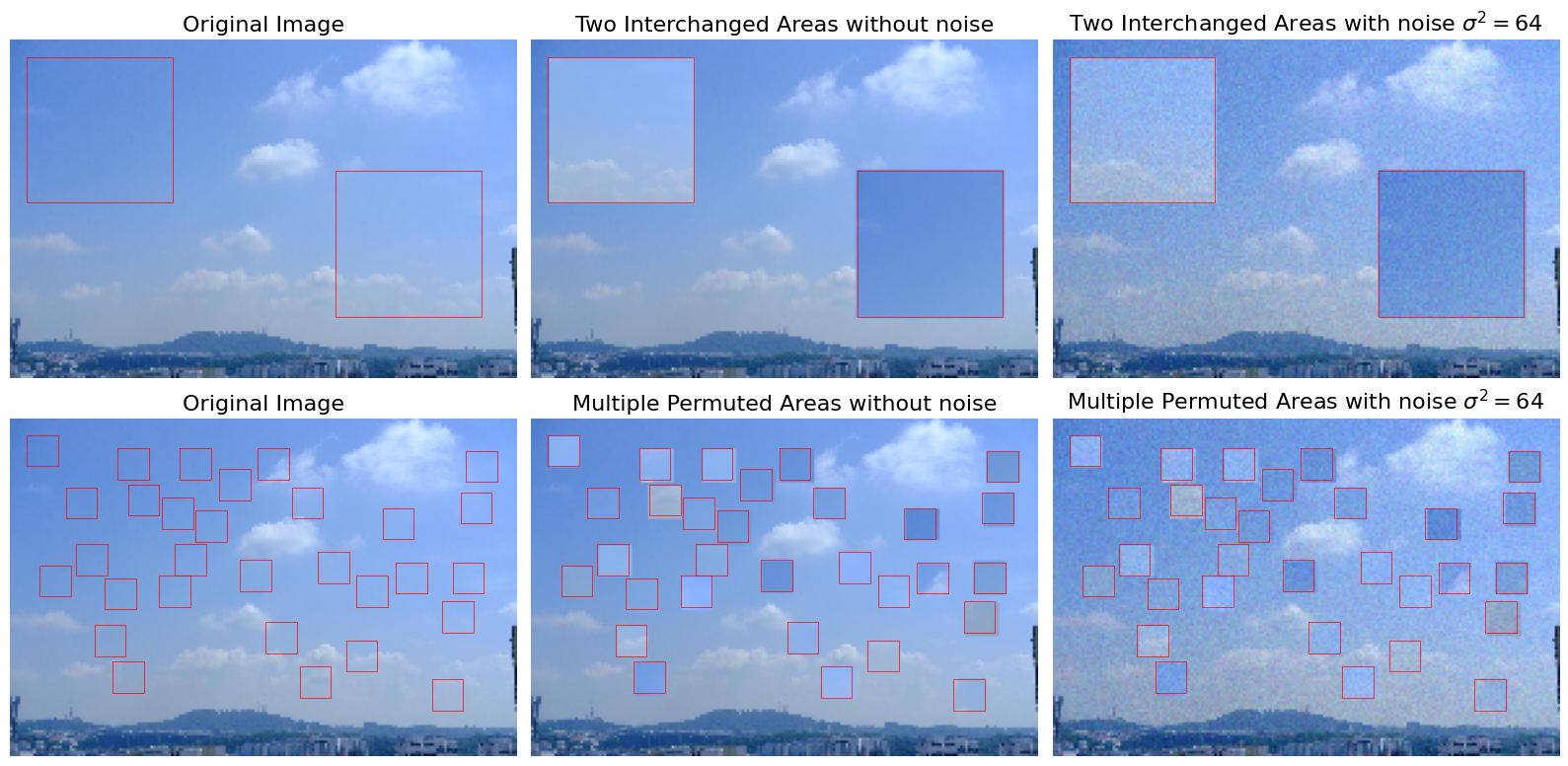}
    \caption{From left to right, we show the original sky photo, the changed sky photo, and the changed sky photo contaminated with noise. The upper panel shows Scenario 1, and the lower panel shows Scenario 2. The red box highlights the areas of exchange.}
    \label{fig:Experiment3_Sky}
\end{figure}

The data generation strategy is as follows. Randomly draw $n$ pixels to form $X \in \reals^{3\times n}$ from the original photo. For Scenario 1, let $Y \in \reals^{3\times n}$ be the corresponding RGB data of $X$ in the interchanged and polluted photo. For Scenario 2, we randomly permute the location of the squares $S_1,\ldots,S_{30}$ before adding noise, and let $Y \in \reals^{3\times n}$ be the corresponding RGB data of $X$ in the changed photo. We take $n=5000$ and vary the noise level over $\sigma^2 \in \{0,5,\ldots,100\}$. We compare Eigenvector Matching (Algorithm~\ref{alg:Eigmatching}) with thresholds $t \in \{0.3,0.5,0.7\}$, Eigenvector Matching with $k$-means, and Row Sum Matching (Algorithm~\ref{alg:RSmatching}) with $k$-means. For each value of $\sigma^2$, the experiment is repeated 1000 times, and the average error rates are reported in Figure~\ref{fig:Experiment3_Compare}.

\begin{figure}[!t]
    \centering
    \includegraphics[width=0.67\linewidth]{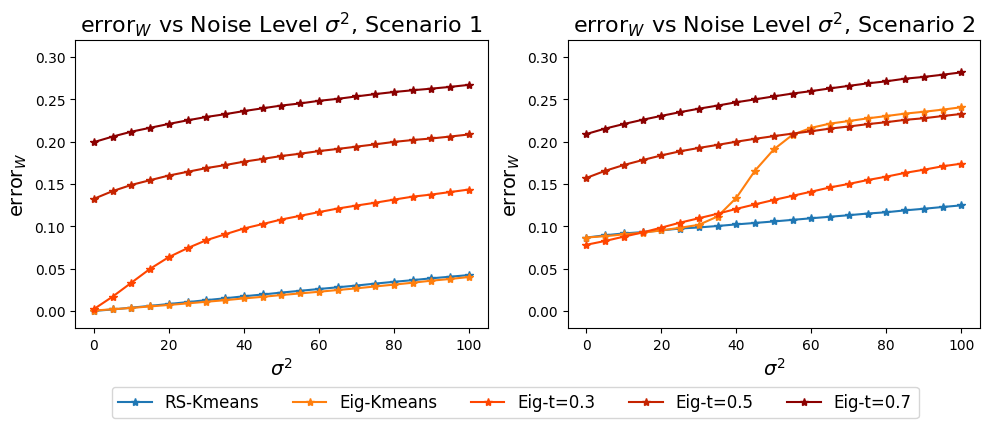}
    \caption{Performance as the noise level increases for the two sky-data scenarios. Left panel: Scenario 1; right panel: Scenario 2.}
    \label{fig:Experiment3_Compare}
\end{figure}

For both scenarios, the overall classification error increases as the noise level becomes larger. This is expected, since stronger noise reduces the separation between matched and mismatched pixels and therefore makes the classification task more difficult. For any fixed noise level, the eigenvector method with a larger threshold produces a larger overall error, indicating that in this experiment, a small positive threshold is more effective for separating matched and mismatched pixels.

In Scenario 1, the $k$-means versions of the eigenvector method and the row-sum method perform similarly, and both outperform the fixed-threshold eigenvector rules. In Scenario 2, the row-sum method with $k$-means gives the best performance among all methods, especially at intermediate noise levels.

Overall, both methods degrade as the noise level increases, while the row-sum method appears more stable in the more complex perturbation setting.

\section{Case Study}\label{sec:case}

In this section, we illustrate the practical behavior of the proposed methods on a simple image-difference detection task. The goal is to identify local regions that appear in one image but not in another, and to examine whether the two matching procedures can correctly localize these differences.

We consider three photographs of the same corner of a room: the original scene, the same scene with an extra yellow box, and the same scene with an extra yellow box and an orange pillow. Based on these images, we study three comparison tasks: detecting the extra box relative to the original image, detecting both extra items relative to the original image, and detecting the extra pillow relative to the image that already contains the box.

For the first two tasks, we compare the original photo with the photo containing the extra box and with the photo containing both the box and the pillow, respectively. For the third task, we compare the photo containing the extra box with the photo containing both extra items, so that the target difference is the additional pillow only. In all three cases, the goal is to identify the pixels corresponding to the newly introduced objects.

\begin{figure}[!t]
    \centering
    \includegraphics[width=\linewidth]{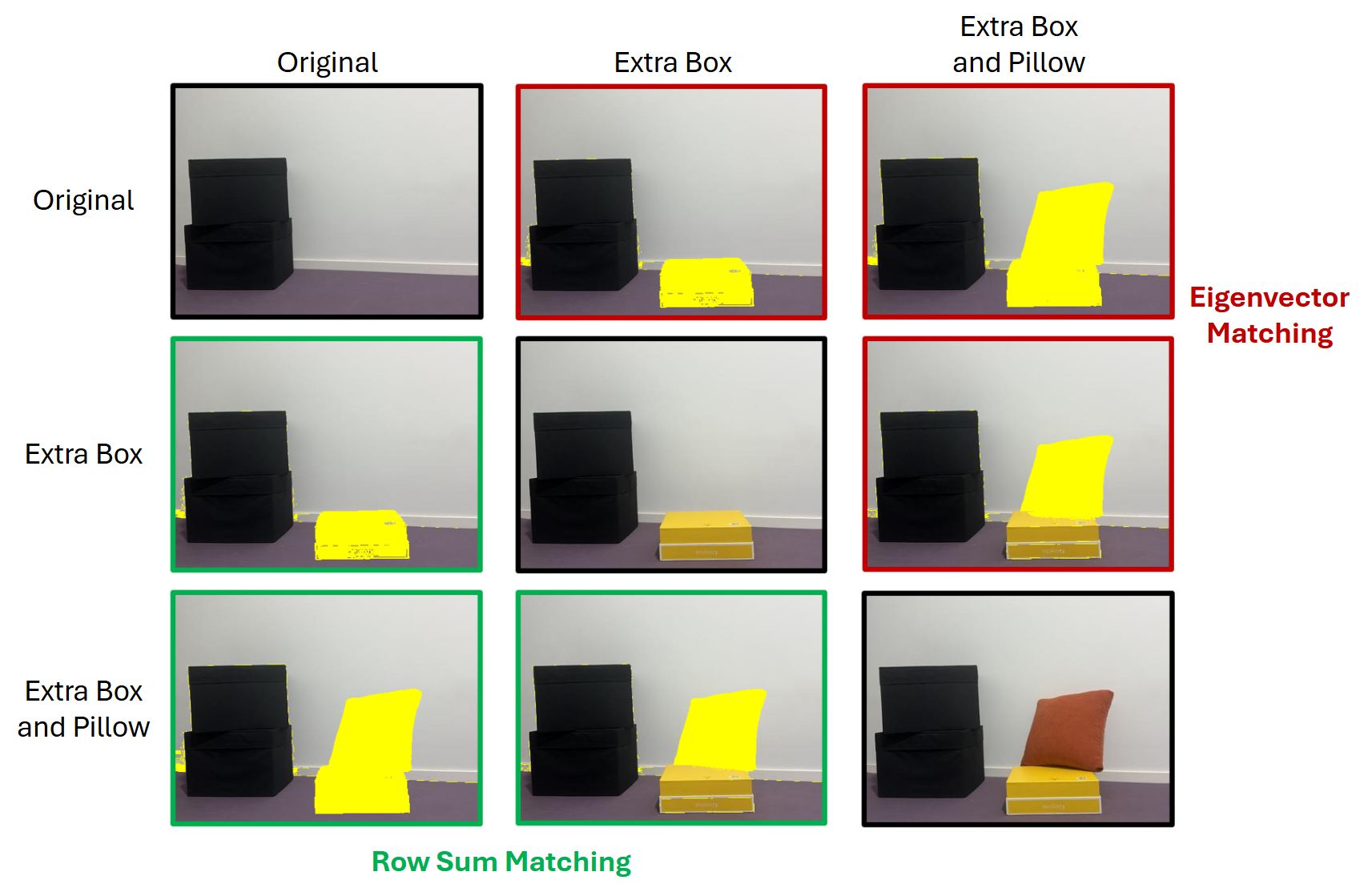}
    \caption{The yellow-highlighted pixels are the ones identified by the algorithm as extra items. The plots with a black frame on the diagonal are the photos used in comparison. The plots with a green frame use the row sum matching algorithm, and the plots with a red frame use the eigenvector matching algorithm.}
    \label{fig:Experiment4_Comparison}
\end{figure}

Figure~\ref{fig:Experiment4_Comparison} displays the results. The images on the diagonal are the original inputs used in each comparison. The panels framed in red show the output of Eigenvector Matching, while the panels framed in green show the output of Row Sum Matching. In each case, the yellow-highlighted pixels indicate the regions identified as different by the algorithm.

For both methods, the differing regions are detected correctly in all three comparisons. In particular, the added box and pillow are clearly localized. At the same time, some pixels in the lower-left part of the images are repeatedly identified as different. This is likely caused by changes in illumination and shadow: when a bright object is added to the scene, nearby shadowed regions become visually darker, and the algorithms interpret these intensity changes as additional differences.

Overall, this case study shows that both matching methods can be used to identify local changes between real images, even when the differing region is relatively small compared with the full image. The results also illustrate a practical limitation: visual changes caused by lighting and shadow may be detected together with the actual inserted objects.

\section{Conclusion}\label{section:conclusion}

In this paper, we studied robust registration with outliers through the product matrix $H$, which transforms the problem into one of label recovery for the inlier set. Based on this formulation, we proposed two matching methods: an eigenvector-based method and a row-sum-based method.

The two methods play complementary roles. The eigenvector method provides a spectral approach with weak recovery guarantees, while the row-sum method yields exact recovery under a broader range of dimensional regimes. Together, they show that accurate inlier identification is possible even under substantial corruption.

We also developed a parallel implementation to improve scalability on large datasets. Numerical experiments and the case study demonstrate that the proposed methods perform well in practice and can be used to identify meaningful differences between paired datasets.

Overall, the results suggest that robust registration can be fruitfully analyzed from the perspective of weighted label recovery, and that this viewpoint leads to both useful theory and effective algorithms.

\section*{Acknowledgments}
Data in Section \ref{subsec:expt_ratio} and \ref{subsec:expt_parallel} were provided by the Human Connectome Project, WU-Minn Consortium (Principal Investigators: David Van Essen and Kamil Ugurbil; 1U54MH091657) funded by the 16 NIH Institutes and Centers that support the NIH Blueprint for Neuroscience Research; and by the McDonnell Center for Systems Neuroscience at Washington University.

\bibliographystyle{plain}
\bibliography{bib}


\appendix

\section{Pre-processing on the Brain Image Data}\label{appendix:Brain}

The details of the data features and pre-processing strategies are listed in Table \ref{table:BrainInfo}.

\begin{table}[!htb]
\footnotesize
    \caption{Description of the Brain Data and Pre-processing Strategy}
    \label{table:BrainInfo}
    \centering
    \begin{tabular}{|c|c|c|}
        \hline
        Features & Details & Data Preparation \\
        \hline
        midthickness & 3d positional information & scale the coordinates to make the features comparable \\
         & & (after centralisation, value is ranged between [-3, 3]) \\
        \hline
        corrThickness & thickness of the cortex & $\backslash$ \\
        \hline
        curvature & degree of local folding & $\backslash$ \\
        \hline
        MyelinMap & proxy of myelin content & omit the extreme values according to its $99.99\%$ quantile. \\
        \hline
    \end{tabular}
\end{table}

\section{Preliminary Lemmas}\label{appendix:pre_lemmas}

\begin{lem}
    \label{lemma:AX}
    Suppose $X_1,...,X_n$ follows $\mathcal{N}(\zero, I_d)$ independently. The condition $n, d \rightarrow \infty$ with $d/n = \gamma \in(0, \infty)$ is satisfied. Then, with probability at least $1 - C_\gamma d^{-2}$, the following holds:
    \begin{equation*}
        \max_{i\neq j} \left| X_i^\top X_j \right| \leq C \sqrt{d\log{d}}.
    \end{equation*}
\end{lem}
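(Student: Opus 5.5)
The argument is a union bound over the $\binom{n}{2}\le n^2$ pairs $i\neq j$, with each inner product $X_i^\top X_j$ controlled by a tail bound for a sum of products of independent Gaussians. The key structural fact is conditional Gaussianity. Conditionally on $X_j$, the scalar $X_i^\top X_j$ is distributed as $\mathcal{N}(0,\|X_j\|^2)$, because $X_i$ is independent of $X_j$ and isotropic. So I would first control the norms and then apply a Gaussian tail bound conditionally.

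\textbf{Step 1 (norms).} Since $\|X_j\|^2\sim\chi^2_d$, the Laurent--Massart inequality gives
\[
\bbP\bigl(\|X_j\|^2\ge d+2\sqrt{dx}+2x\bigr)\le e^{-x}.
\]
Take $x=4\log d$. Because $d\to\infty$, we have $2\sqrt{dx}+2x\le d$ for large $d$, so $\|X_j\|^2\le 2d$ with probability at least $1-d^{-4}$. A union bound over the $n=d/\gamma$ indices shows that $\max_j\|X_j\|^2\le 2d$ except on an event of probability at most $\gamma^{-1}d^{-3}$.

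\textbf{Step 2 (pairs).} Conditionally on $X_j$, on the event $\|X_j\|^2\le 2d$, the Gaussian tail bound gives
\[
\bbP\bigl(|X_i^\top X_j|\ge s \,\big|\, X_j\bigr)\le 2\exp\bigl(-s^2/(4d)\bigr).
\]
Take $s=C\sqrt{d\log d}$ with $C^2/4\ge 4$, for instance $C=4$. The conditional probability is then at most $2d^{-4}$. Integrating over $X_j$ and adding the probability of the bad norm event, each pair fails with probability at most $2d^{-4}+d^{-4}$. A union bound over at most $n^2=\gamma^{-2}d^2$ pairs gives total failure probability at most $3\gamma^{-2}d^{-2}$. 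Together with Step 1 this is at most $C_\gamma d^{-2}$, as claimed. The proportionality $d/n=\gamma$ is used only to convert $\log n$ and $n^2$ into $\log d$ and $d^2$ up to constants depending on $\gamma$.

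An alternative to the conditioning argument is the decoupled form of Bernstein's inequality. Each product $X_{ik}X_{jk}$ is sub-exponential with $\psi_1$-norm $O(1)$, so
\[
\bbP\bigl(|X_i^\top X_j|\ge s\bigr)\le 2\exp\bigl(-c\min(s^2/d,\,s)\bigr).
\]
With $s=C\sqrt{d\log d}$ and $\log d\le d$, we get $\min(s^2/d,s)\ge C'\log d$ for large $d$. The same union bound then finishes.

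\textbf{Main obstacle.} There is no real obstacle; the only care needed is the bookkeeping of constants. The exponent per pair must beat $d^{2}\cdot d^{2}$, that is, the $n^2\asymp d^2$ pairs times the target $d^{-2}$ failure rate. This forces $C$ to be a sufficiently large absolute constant. We also need $d$ large enough that $\sqrt{d\log d}\le d$, so that the sub-Gaussian regime of the Bernstein bound is the relevant one; this holds for all large $d$, which is where the statement applies.
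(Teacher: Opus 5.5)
Your proposal is correct and follows essentially the same route as the paper. Both arguments bound one vector's squared norm by a chi-square tail at level $d^{-4}$ (you use $\|X_j\|^2\le 2d$, the paper uses $\|X_i\|^2\le 3d$), use the exact conditional Gaussianity of the inner product to get a per-pair failure probability of order $d^{-4}$, and take a union bound over the $O(n^2)=O_\gamma(d^2)$ pairs. Your explicit constants (the step $2\exp(-s^2/(4d))$ with $C=4$) are slightly cleaner than the paper's, and your Step 1 union bound over all norms is harmless but unnecessary, since the norm event is already absorbed into the per-pair bound.
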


\begin{proof}
    Note that $\|X_i\|^2 \sim \mathcal{X}^2_d$. By the tail probability of the Chi-square random variable, we can derive $\bbP\left( \|X_i\|^2 \geq 3d \right) \leq d^{-4}$. For each pair of $i \neq j$, we decompose:
    \begin{align*}
        \bbP \left( \left| X_i^\top X_j \right| \geq C \sqrt{d\log{d}} \right)
        \leq & \bbP \left( \left| X_i^\top X_j \right| \geq C\sqrt{d\log{d}}, \|X_i\| \leq \sqrt{3d} \right) + \bbP \left( \|X_i\|^2 \geq 3d \right) \\
        \leq & \bbP\left( \left| X_i^\top X_j \right| \geq C \sqrt{\log{d}} \|X_i\| \right) + d^{-4}
    \end{align*}
    Since $\frac{X_i^\top X_j}{\|X_i\|} \sim \mathcal{N}(0, 1)$, the tail probability satisfies
    \begin{equation*}
        \bbP \left( \left| \frac{X_i^\top X_j}{\|X\|} \right| \geq C\sqrt{\log{d}} \right) \leq 2d^{-4}.
    \end{equation*}
    Then, $\bbP\left( \left| X_i^\top X_j \right| \geq C\sqrt{d\log{d}} \right) \leq 3d^{-4}$ follows. Note that there are $n(n-1)/2$ distinct pairs of $i$ and $j$, the uniform bound can be easily derived, 
    \begin{equation*}
        \bbP\left( \max_{i\neq j} \left| X_i^\top X_j \right| \geq C\sqrt{d\log{d}} \right) 
        \leq \frac{n(n-1)}{2} \cdot 3d^{-4} 
        \leq C_\gamma d^{-2}.
    \end{equation*}
    The last equality follows as $d/n=\gamma$.
\end{proof}

\begin{lem}
    \label{lemma:BB}
    Suppose $X=(X_1,...,X_n)\in\reals^{d\times n}$ and $Y=(Y_1,...,Y_n)\in\reals^{d\times n}$ are independent matrices. The collection $\{ X_1,\ldots, X_n, Y_1,\ldots,Y_n \}$ consists of independent random vectors, each distributed as $\mathcal{N}(\zero, I_d)$. Let $u\in \reals^n$ satisfy $\|u\|=1$, and define the function $F_u:\reals^{d\times n} \times \reals^{d\times n}\rightarrow \reals$ by
    \begin{equation*}
        F_u(X, Y) = \sum_{i\in [n]} \sum_{j\in[n]\backslash\{i\}} u_i u_j \left( X_i^\top X_j \right) \left( Y_i^\top Y_j \right).
    \end{equation*}
    Also, define the event $A_x$ as
    \begin{equation*}
        \max_{i\neq j} \left| X_i^\top X_j \right| \leq C \sqrt{d\log{d}}, \quad \|X\| \leq C\sqrt{n\log{n}}.
    \end{equation*}
    Assume $A_X$ takes place and $n, d \rightarrow \infty$ with $d/n = \gamma \in (0, \infty)$. Then, for any fixed constant $C > 0$, there is an $C_\gamma > 0$ such that 
    \begin{equation*}
        \bbP\left( \left. F_u(X, Y) \geq C n^{\frac{3}{2}}\log^2{n} \right| X \right) \leq 3\exp\{ -C_\gamma n\log{n} \}.
    \end{equation*}
\end{lem}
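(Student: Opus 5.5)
Conditional on $X$, $F_u(X,Y)$ is a quadratic form in the Gaussian vector $Y$, and the plan is to bound it with the Hanson--Wright inequality. Write $M = X^\top X$ and let $W\in\reals^{n\times n}$ have entries $W_{ij}=u_iu_jM_{ij}$ for $i\neq j$ and $W_{ii}=0$; equivalently $W = D_u(M - \mathrm{diag}(M))D_u$ with $D_u=\mathrm{diag}(u)$. Then
\[
F_u(X,Y)=\sum_{i\neq j}W_{ij}Y_i^\top Y_j=\sum_{k=1}^d \tilde Y_k^\top W \tilde Y_k,
\]
where $\tilde Y_k\in\reals^n$ is the $k$-th row of $Y$. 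Stacking the rows gives $y=\mathrm{vec}(Y^\top)\sim\mathcal N(\zero,I_{dn})$ and $F_u=y^\top (I_d\otimes W)y$. Because $W$ has zero diagonal, $\E[F_u\mid X]=0$.

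Hanson--Wright then gives
\[
\bbP(F_u\ge s\mid X)\le 2\exp\Bigl\{-c\min\Bigl(\tfrac{s^2}{d\|W\|_F^2},\tfrac{s}{\|W\|}\Bigr)\Bigr\}.
\]
So it remains to bound the two norms of $W$ on $A_X$.

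\textbf{Frobenius norm.} $\|W\|_F^2=\sum_{i\neq j}u_i^2u_j^2M_{ij}^2\le \max_{i\neq j}M_{ij}^2\,(\sum_i u_i^2)^2\le C^2 d\log d$, so $d\|W\|_F^2\le C d^2\log d\asymp C_\gamma n^2\log n$.

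\textbf{Operator norm.} $\|W\|\le \|D_u\|^2(\|M\|+\max_i M_{ii})\le \|X\|^2+\max_i\|X_i\|^2$. On $A_X$, $\|X\|^2\le Cn\log n$, and $\max_i\|X_i\|^2\le \|X\|^2$, so $\|W\|\le Cn\log n$.

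With $s=Cn^{3/2}\log^2 n$, the two exponents are
\[
\frac{s^2}{d\|W\|_F^2}\gtrsim_\gamma \frac{n^3\log^4 n}{n^2\log n}=n\log^3 n,
\qquad
\frac{s}{\|W\|}\gtrsim \frac{n^{3/2}\log^2 n}{n\log n}=\sqrt n\log n .
\]

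The operator-norm term is the bottleneck. It gives only $\exp(-c\sqrt n\log n)$, weaker than the claimed $\exp(-C_\gamma n\log n)$. The claimed rate matters, since presumably it has to beat a union bound over an $\epsilon$-net of size $e^{Cn}$. The cause is that $A_X$ only provides the crude bound $\|X\|\lesssim\sqrt{n\log n}$, whereas the truth is $\|X\|\lesssim\sqrt n$ for $d\asymp n$.

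My first fix is to sharpen the operator-norm estimate. The diagonal removal is harmless, so $\|W\|\le\|M\|$. If I may use $\|X\|\le C(\sqrt n+\sqrt d)$, the standard high-probability bound, then $\|W\|\lesssim_\gamma n$ and $s/\|W\|\gtrsim \sqrt n\log^2 n$. This is still short of $n\log n$.

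Reaching $n\log n$ therefore needs something beyond plain Hanson--Wright. My second attempt is a truncation argument. Split $Y$ on the event $\max_i\|Y_i\|\le C\sqrt d$, which fails with probability $ne^{-cd}$. The bound $\exp(-C_\gamma n\log n)$ cannot be reached from that failure probability either, so the truncation has to be carried out more carefully. One possibility is a decoupling step with a moment-generating-function bound, combined with Gaussian concentration applied to the Lipschitz function $Y\mapsto \|(I\otimes W^{1/2})y\|$ when $W$ is PSD-split. I expect the final "3" in the statement to come from three such pieces: the positive and negative parts of $W$, plus one truncation event.

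I would follow the Hanson--Wright route first, and I expect matching the exponent $n\log n$ to be the main obstacle. If that fails, I would tune $s$, or accept the weaker $\exp(-c\sqrt n\log n)$ exponent and check whether it suffices for the subsequent net argument, by pairing it with a net of size $e^{C\sqrt n}$ over sparse or low-complexity $u$.
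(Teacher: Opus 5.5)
\textbf{There is a genuine gap: the proposal never establishes the stated bound.} Your Hanson--Wright reduction is correct: $F_u=y^\top(I_d\otimes W)y$ with zero conditional mean, and $\|W\|_F^2\le C^2 d\log d$ on $A_X$. But you then call the operator-norm term a bottleneck, leave the $n\log n$ exponent unresolved, and move on to truncation and decoupling ideas and a fallback.

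That diagnosis is wrong. The bound $\|W\|\le\|D_u\|^2(\|M\|+\max_i M_{ii})\lesssim n\log n$ discards the weights $u_iu_j$. Sharpening $\|X\|$ to $\sqrt n+\sqrt d$ cannot help, because $\|M\|\asymp n$ is the wrong scale altogether. The missing step is the trivial inequality $\|W\|\le\|W\|_F\le C\sqrt{d\log d}$, and you had already derived the second half of it. With $s=Cn^{3/2}\log^2 n$ and $d=\gamma n$,
\[
\frac{s}{\|W\|}\ \ge\ \frac{n^{3/2}\log^2 n}{\sqrt{d\log d}}\ \asymp_\gamma\ n\log^{3/2} n,
\qquad
\frac{s^2}{d\,\|W\|_F^2}\ \ge\ \frac{n^{3}\log^4 n}{d^2\log d}\ \asymp_\gamma\ n\log^{3} n .
\]
Hanson--Wright then gives $\bbP(F_u\ge s\mid X)\le 2\exp\{-c_\gamma n\log^{3/2}n\}\le 2\exp\{-C_\gamma n\log n\}$ for large $n$. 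This needs no truncation of $Y$, no decoupling or PSD splitting, and not even the condition $\|X\|\le C\sqrt{n\log n}$.

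Your fallback would not have worked. Settling for an $\exp(-c\sqrt n\log n)$ tail and using a net over sparse $u$ does not serve Proposition~\ref{prop:BB}, which takes a union bound over a $1/4$-net of the whole sphere, of size $9^n$.

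\textbf{How the paper does it.} The paper takes a different route: a Gaussian-interpolation (Maurey--Pisier) bound on the conditional moment generating function, using $Y_\theta=Y\sin\theta+\tilde Y\cos\theta$. Its key estimate is
\[
\|\nabla_Y F_u\|^2\le 4\|Y\|^2\sum_{i\neq j}u_i^2u_j^2(X_i^\top X_j)^2=4\|Y\|^2\|W\|_F^2,
\]
which is exactly the Frobenius quantity you overlooked. This is applied on the truncation event $\|Y\|\le C\sqrt{n\log n}$. The constant $3$ is $1$ from the MGF/Markov step plus $2$ from that event's failure probability, not from a positive/negative split of $W$.

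Once repaired, your route is cleaner. Hanson--Wright controls the Gaussian chaos directly. In the paper's argument, by contrast, the indicator $\mathbb{I}_{B_Y}$ constrains $Y$, while the gradient is evaluated at $Y_\theta$ and the Gaussian MGF is taken over $Y_\theta'$, on which the indicator also depends. That makes the truncation step delicate.
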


\begin{proof}
    Denote $\|Y\| \leq C\sqrt{n\log{n}}$ as event $B_Y$. Columns of $Y$ are independent standard Gaussian random vectors in $\reals^d$, by \cite{Vershynin_2018}, $\bbP\left( B_Y \right) \geq 1-2\exp\{ -C_0n\log{n} \}$.

    Suppose $\tilde{Y}$ is an independent copy of $Y$. Define $\phi(y) = \E[\exp\{ t(F_u(X,Y) - y) \} | X]$. Let $y = F_u(X, \tilde{Y})$, then $\E[y|X]=0$. By Jensen's inequality,
    \begin{align} \label{eq:jensenBB}
        \E\left[ \left. \exp\left\{ tF_u(X, Y) \right\} \right| X \right]  
        \leq \E\left[ \left. \exp\left\{ t \left( F_u(X, Y) - F_u(X, \tilde{Y}) \right) \right\} \right| X \right].
    \end{align}
    Next, we check
    \begin{equation*}
        \nabla_{Y_i}F_u(X, Y) = 2 \sum_{j\in [n]\backslash\{i\}} u_i u_j(X_i^\top X_j) Y_j = u_i Y_{-i} V_i,
    \end{equation*}
    where $Y_{-i}$ is the $Y$ matrix with the $i$-th column taking zeros, and $V_i\in\reals^n$ with entries: $V_{i,j} = 2 u_j(X_i^\top X_j)$, $j\in[n]\backslash\{i\}$; $V_{i,i} = 0$. When $A_X$ and $B_Y$ takes place,
    \begin{align} \label{eq:normBB}
        \left\| \nabla_{Y} F_u(X, Y) \right\|^2 
        \leq \|Y\|^2 \sum_{i\in[n]} u_i^2 \|V_i\|^2 
        = \|Y\|^2 \sum_{i\in[n]} \sum_{j\in[n]\backslash\{i\}} 4 u_i^2 u_j^2 (X_i^\top X_j)^2 
        \leq Cnd\log{n} \log{d} .
    \end{align}
    Define $Y_{\theta} = Y\sin\theta + \tilde{Y}\cos\theta$, then $Y_{\theta}' = \frac{\td Y_{\theta}}{\td \theta} = Y\cos\theta - \tilde{Y} \sin\theta$. It is easy to verify that $Y_{\theta}$ and $Y_{\theta}'$ are independent. Note that $Y_0=\tilde{Y}$ and $Y_{\pi/2} = Y$. Then,
    \begin{equation*}
        F_u(X, Y) - F_u(X, \tilde{Y}) = \int_0^{\pi/2} \left\langle \nabla_{Y_{\theta}}F_u(X, Y_{\theta}), Y_{\theta}' \right\rangle \td\theta.
    \end{equation*}
    Equivalently, we have
    \begin{equation*}
        \E\left[ \left. \exp \left\{ t\left( F_u(X, Y) - F_u(X, \tilde{Y}) \right) \right\} \mathbb{I}_{B_Y} \right| X \right] = \E\left[ \left. \exp \left\{ t \int_0^{\pi/2} \left\langle \nabla_{Y_{\theta}}F_u(X, Y_{\theta}), Y_{\theta}' \right\rangle \td\theta \right\} \mathbb{I}_{B_Y} \right| X \right].
    \end{equation*}
    Applying Inequality \ref{eq:jensenBB},
    \begin{align*}
        \E\left[ \left. \exp\left\{ tF_u(X, Y) \right\} \mathbb{I}_{B_Y} \right| X \right] 
        \leq \E\left[ \left. \exp\left\{ t \int_0^{\pi/2} \left\langle \nabla_{Y_{\theta}}F_u(X, Y_{\theta}), Y_{\theta}' \right\rangle \td\theta \right\} \mathbb{I}_{B_Y} \right| X \right] .
    \end{align*}
    By Jensen's inequality and Fubini's theorem,
    \begin{align*}
        \E\left[ \left. \exp\left\{ tF_u(X, Y) \right\} \mathbb{I}_{B_Y} \right| X \right] 
        \leq \frac{2}{\pi} \int_0^{\pi/2} \E\left[ \left. \exp\left\{ \frac{t\pi}{2}  \left\langle \nabla_{Y_{\theta}}F_u(X, Y_{\theta}), Y_{\theta}' \right\rangle \right\} \mathbb{I}_{B_Y} \right| X \right] \td\theta .
    \end{align*}
    Since $Y_{\theta}'$ is a matrix with each entry following a standard Gaussian distribution independently, we can use the moment generating function of $Y_{\theta}'$ to derive
    \begin{align*}
        \E\left[ \left. \exp\left\{ tF_u(X, Y) \right\} \mathbb{I}_{B_Y} \right| X \right] 
        \leq \frac{2}{\pi} \int_0^{\pi/2} \E\left[ \left. \exp\left\{ \frac{t^2\pi^2}{8}  \left\| \nabla_{Y_{\theta}}F_u(X, Y_{\theta}) \right\|^2 \right\} \mathbb{I}_{B_Y} \right| X \right] \td\theta .
    \end{align*}
    As $A_X$ takes place and $\mathbb{I}_{B_Y}$ exists, by Inequality \ref{eq:normBB},
    \begin{align*}
        \E\left[ \left. \exp\left\{ tF_u(X, Y) \right\} \mathbb{I}_{B_Y} \right| X \right] 
        \leq \frac{2}{\pi} \int_0^{\pi/2} \E\left[ \left. \exp\left\{ t^2 Cnd\log{n}\log{d} \right\} \mathbb{I}_{B_Y} \right| X \right] \td\theta 
        \leq \exp\left\{ t^2 C_\gamma n^2\log^2{n} \right\} .
    \end{align*}
    By Markov's inequality,
    \begin{align*}
        \bbP\left( \left. F_u(X, Y) \geq C n^{\frac{3}{2}}\log^2{n} \right| X \right)
        \leq & \frac{\E\left[ \left. \exp\left\{ tF_u(X, Y) \right\} \mathbb{I}_{B_Y} \right| X \right]}{\exp\left\{ tCn^{\frac{3}{2}}\log{n} \right\}} + \bbP\left( B_Y^c \right) \\
        \leq & \exp\left\{ t^2 C_\gamma n^2\log^2{n} - tC n^{\frac{3}{2}}\log^2{n} \right\} + 2\exp\left\{ -C_0 n\log{n} \right\}
    \end{align*}
    Note that $d/n=\gamma \in (0, \infty)$. Taking $t = \frac{1}{\sqrt{n}\log{n}}$, there exist $C_\gamma>0$ such that
    \begin{equation*}
        \bbP\left( \left. F_u(X, Y) \geq C n^{\frac{3}{2}}\log^2{n} \right| X \right)
        \leq 3\exp\{ -C_\gamma n\log{n} \}.
    \end{equation*}
\end{proof}

\begin{prop} 
    \label{prop:BB}
    Suppose $X=(X_1,...,X_n)\in\reals^{d\times n}$ and $Y=(Y_1,...,Y_n)\in\reals^{d\times n}$ are independent matrices. The collection $\{ X_1,\ldots, X_n, Y_1,\ldots,Y_n \}$ consists of independent random vectors, each distributed as $\mathcal{N}(\zero, I_d)$. Define matrix $K(X, Y) \in \reals^{n\times n}$ as
    \begin{align*}
        K(X, Y)_{ij} = \left\{ \begin{array}{rcl}
            X_i^\top X_j Y_i^\top Y_j, & i\neq j; \\
            0, & i = j.
        \end{array} \right.
    \end{align*}
    Let $n, d \rightarrow \infty$ with $d/n = \gamma \in (0, \infty)$. Th en, with probability $1 - C_\gamma n^{-2}$,
    \begin{equation*}
        \|K(X, Y)\| \leq 2C n^{\frac{3}{2}}\log^2{n}.
    \end{equation*}
\end{prop}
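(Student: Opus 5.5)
We plan to prove Proposition~\ref{prop:BB} by an $\epsilon$-net argument on the unit sphere, using Lemma~\ref{lemma:BB} as the pointwise tail bound. Since $K(X,Y)$ is symmetric with zero diagonal, its spectral norm is
\[
\|K\|=\sup_{\|u\|=1}|u^\top K u|,
\]
and $u^\top K u = F_u(X,Y)$ exactly as defined in Lemma~\ref{lemma:BB}. The proof has three steps.

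\textbf{Step 1: control of $X$.} First we show that the event $A_X$ holds with probability at least $1-C_\gamma n^{-2}$.
\begin{itemize}
\item The inner-product part follows from Lemma~\ref{lemma:AX}, whose failure probability is $C_\gamma d^{-2}=C_\gamma' n^{-2}$ because $d/n=\gamma$.
\item The operator-norm part $\|X\|\le C\sqrt{n\log n}$ follows from the standard Gaussian matrix bound $\|X\|\le \sqrt d+\sqrt n+t$ with probability at least $1-2e^{-ct^2}$ \cite{Vershynin_2018}. Taking $t\asymp\sqrt{n\log n}$ makes this failure probability far smaller than $n^{-2}$.
\end{itemize}

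\textbf{Step 2: net and union bound.} Fix a $1/4$-net $\mathcal N$ of the unit sphere in $\reals^n$ with $|\mathcal N|\le 9^n$. Conditional on $X\in A_X$, Lemma~\ref{lemma:BB} gives, for each $u\in\mathcal N$,
\[
\bbP\bigl(F_u\ge Cn^{3/2}\log^2 n \mid X\bigr)\le 3e^{-C_\gamma n\log n}.
\]
Applying the lemma to $-K$, i.e.\ replacing $Y_1$ by $-Y_1$, does not work, because $F_u$ is quadratic in $Y$ and is unchanged. Instead we repeat the lemma's argument with $t<0$; equivalently, we note that the Gaussian-interpolation proof bounds $\E[\exp\{tF_u\}\mathbb I_{B_Y}\mid X]$ for all real $t$. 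This gives the same bound for $-F_u$. A union bound over $\mathcal N$ then costs a factor $2\cdot 9^n=e^{O(n)}$, which is absorbed by $e^{-C_\gamma n\log n}$. Hence, conditionally on $A_X$,
\[
\max_{u\in\mathcal N}|u^\top K u|\le Cn^{3/2}\log^2 n
\]
with probability at least $1-e^{-cn\log n}$.

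\textbf{Step 3: from the net to the sphere.} We use the standard symmetric-matrix net lemma: for a $1/4$-net,
\[
\|K\|\le \frac{1}{1-2\cdot\tfrac14}\max_{u\in\mathcal N}|u^\top Ku| = 2\max_{u\in\mathcal N}|u^\top Ku|.
\]
This produces exactly the constant $2C$ in the statement. Removing the conditioning and adding the $C_\gamma n^{-2}$ failure probability of $A_X$ from Step 1 yields the claimed probability $1-C_\gamma n^{-2}$.

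\textbf{Main obstacle.} The delicate point is Step 2: the pointwise tail bound must beat the entropy of the net, $e^{O(n)}$. Lemma~\ref{lemma:BB} provides $e^{-C_\gamma n\log n}$, which is enough, but only conditionally on $X$. We must therefore check that the lemma's constant is uniform over all $u$ and over all $X\in A_X$; this holds because the gradient bound \eqref{eq:normBB} depends only on $\|u\|=1$ and on $A_X$. The two-sided version of the lemma needs the same uniformity check.
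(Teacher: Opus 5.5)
Your proposal is correct and follows the paper's proof: a $1/4$-net of cardinality at most $9^n$, then Lemma~\ref{lemma:BB} applied conditionally on $A_X$ with a union bound whose $9^n$ factor is absorbed by $e^{-C_\gamma n\log n}$, then $\bbP(A_X^c)\le C_\gamma n^{-2}$ from Lemma~\ref{lemma:AX} together with the Gaussian operator-norm bound. Your one departure is an improvement: the paper uses the one-sided inequality $\|K(X,Y)\|\le\frac{1}{1-2\epsilon}\max_{u\in N_\epsilon}\langle K(X,Y)u,u\rangle$, which fails when the norm is attained at a negative eigenvalue, whereas you work with $|u^\top K u|$ and also supply the lower tail, which closes that gap. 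You get the lower tail by running the lemma's argument with $t<0$, which is legitimate because the Jensen and Gaussian-MGF steps are sign-symmetric; alternatively, Hanson--Wright applied to the conditionally Gaussian chaos $F_u$ gives both tails at once.
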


\begin{proof}
    Let $N_{\epsilon}$ be the $\epsilon$-net of the unit sphere $S^{n-1}$. By \cite{Vershynin_2018}, the cardinality of $N_{\epsilon}$ is bounded by
    \begin{equation*}
        |N_{\epsilon}| \leq \left( 1 + \frac{2}{\epsilon} \right)^n;
    \end{equation*}
    also,
    \begin{equation*}
        \|K(X, Y)\| \leq \frac{1}{1-2\epsilon} \cdot \max_{u\in N_{\epsilon}} \left\langle K(X, Y)u, u \right\rangle.
    \end{equation*}
    Define the event $A_X$ as
    \begin{equation*}
        \max_{i\neq j} \left| X_i^\top X_j \right| \leq C \sqrt{d\log{d}}, \quad \|X\| \leq C\sqrt{n\log{n}} .
    \end{equation*}
    Take $\epsilon = \frac{1}{4}$, we have
    \begin{align*}
        \bbP\left( \|K(X, Y)\| \geq 2C n^{\frac{3}{2}}\log^2{n} \right)
        \leq & \E\left[ \bbP\left( \left. \max_{u\in N_{\epsilon}} \left\langle K(X, Y)u, u \right\rangle \geq C n^{\frac{3}{2}}\log^2{n} \right| X \right) \mathbb{I}_{A_X} \right] + \bbP\left( A_X^c \right) \\
        \leq & 9^n \cdot \E\left[ \bbP\left( \left. \left\langle K(X, Y)u, u \right\rangle \geq C n^{\frac{3}{2}}\log^2{n} \right| X \right) \mathbb{I}_{A_X} \right] + \bbP\left( A_X^c \right)
    \end{align*}
    Since entries of $X$ are independent standard Gaussian random vectors in $\reals^d$, \cite{Vershynin_2018} tells that the probability of $\|X\| \leq C\sqrt{n\log{n}}$ taking place is at least $1 - 2\exp\{ -C_0n\log{n} \}$. By Lemma \ref{lemma:AX}, we can directly deduce that $\bbP\left( A_X \right) \geq 1 - C_\gamma d^{-2}$. The inner product $\left\langle K(X, Y)u, u \right\rangle$ is the same as the $F_u(X, Y)$ function defined in Lemma \ref{lemma:BB}. With the existence of $\mathbb{I}_{A_X}$, conditions of Lemma \ref{lemma:BB} are satisfied. Then,
    \begin{align*}
        \bbP\left( \|K(X, Y)\| \geq 2C n^{\frac{3}{2}}\log^2{n} \right) 
        \leq 9^{n} \cdot 3\exp\left\{ -C_\gamma n\log{n} \right\} + C_\gamma d^{-2} 
        \leq \exp\{-C_\gamma n\log n\} + C_\gamma n^{-2} .
    \end{align*}
    With $n$ large enough, we have 
    \begin{equation*}
        \bbP\left( \|K(X, Y)\| \leq 2C n^{\frac{3}{2}}\log^2{n} \right) \geq 1 - C_\gamma n^{-2} .
    \end{equation*}
\end{proof}

\begin{lem}
    \label{lemma:GB}
    Suppose $X=(X_1,...,X_{n+m})\in\reals^{d\times (n+m)}$ and $Y=(Y_1,...,Y_n)\in\reals^{d\times n}$ are independent matrices. The collection $\{ X_1,\ldots, X_{n+m}, Y_1,\ldots,Y_n \}$ consists of independent random vectors, each distributed as $\mathcal{N}(\zero, I_d)$. Let $u\in \reals^n$ and $v\in \reals^m$ satisfy $\|u\|=\|v\|=1$, and $R\in\reals^{d\times d}$ be an orthogonal rotation matrix. Define the function $F_{uv}:\reals^{d\times n} \times \reals^{d\times n}\rightarrow \reals$,
    \begin{equation*}
        F_{uv}(X, Y) = \sum_{i\in [n]} \sum_{j\in[m]} u_i v_j \left( X_i^\top X_{n+j} \right) \left( X_i^\top R^\top Y_j \right),
    \end{equation*}
    Also, define the event $A_x$ as
    \begin{equation*}
        \max_{i\in[n],j\in[m]} \left| X_i^\top X_{n+j} \right| \leq C \sqrt{d\log{d}}, \quad \|X_{1:n}\|\leq C\sqrt{n\log{n}},
    \end{equation*}
    where $X_{1:n}$ formed by the first $n$ columns of $X$. Assume $A_X$ takes place. Let $n, m, d \rightarrow \infty$ with $d/n = \gamma_1 \in (0, \infty)$ and $d/m = \gamma_2\in(0, \infty)$. Then, for any fixed constant $C > 0$, there is an $C_\gamma > 0$ such that 
    \begin{equation*}
        \bbP\left( \left. F_{uv}(X, Y) \geq C n^{\frac{3}{2}}\log^2{n} \right| X \right) \leq \exp\{ -C_\gamma n\log{n} \}.
    \end{equation*}
\end{lem}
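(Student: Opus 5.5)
Conditionally on $X$, the function $F_{uv}(X,Y)$ is \emph{linear} in $Y$. This makes the statement simpler than Lemma~\ref{lemma:BB}, where $F_u$ was quadratic in $Y$. I would write
\[
F_{uv}(X,Y)=\sum_{j\in[m]}\langle w_j, Y_j\rangle,\qquad w_j:= v_j\, R\sum_{i\in[n]} u_i \left(X_i^\top X_{n+j}\right) X_i \in\reals^d .
\]
Given $X$, the $Y_j$ are i.i.d.\ $\mathcal{N}(\zero,I_d)$ and independent of $X$, so $F_{uv}\mid X\sim\mathcal{N}(0,\sigma^2)$ with $\sigma^2=\sum_{j}\|w_j\|^2$. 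The Gaussian tail then gives
\[
\bbP\bigl(F_{uv}\ge s\mid X\bigr)\le \exp\{-s^2/(2\sigma^2)\}.
\]
If one prefers to mirror the paper's style, the interpolation argument of Lemma~\ref{lemma:BB} also works: the gradient $\nabla_{Y}F_{uv}=(w_1,\dots,w_m)$ does not depend on $Y$, so no event $B_Y$ is needed. This explains why the bound carries no factor $3$ or extra $\exp$ term.

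The remaining step is to bound $\sigma^2$ on $A_X$. Since $R$ is orthogonal, $\|w_j\|=|v_j|\,\|X_{1:n}\,a^{(j)}\|$, where $a^{(j)}_i=u_i X_i^\top X_{n+j}$. On $A_X$ I would use
\[
\|X_{1:n}a^{(j)}\|\le \|X_{1:n}\|\,\|a^{(j)}\|\le C\sqrt{n\log n}\cdot C\sqrt{d\log d}\,\|u\|.
\]
Summing over $j$ with $\sum_j v_j^2=1$ gives
\[
\sigma^2\le C\, nd\log n\log d\le C_\gamma n^2\log^2 n,
\]
using $d/n=\gamma_1$. This matches \eqref{eq:normBB}.

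Taking $s=Cn^{3/2}\log^2 n$ then yields
\[
\exp\{-C^2 n^3\log^4 n/(2C_\gamma n^2\log^2 n)\}=\exp\{-C_\gamma' n\log^2 n\}\le \exp\{-C_\gamma n\log n\}.
\]
The same conclusion follows via Markov's inequality with $t=1/(\sqrt n\log n)$, exactly as in Lemma~\ref{lemma:BB}.

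The only delicate point is the bound on $\sigma^2$. One must make sure that the event $A_X$ controls exactly the cross inner products $X_i^\top X_{n+j}$ and the norm of the first $n$ columns, as stated. One must also make sure that the rotation $R$ is absorbed by orthogonal invariance. The dependence on $m$ enters only through $\|v\|=1$, so the condition $d/m=\gamma_2$ is not really needed beyond fixing the scales of the constants.
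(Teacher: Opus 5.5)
Your proof is correct. It reaches the conclusion by a more direct route than the paper, although the key estimate is the same.

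\textbf{The paper's route.} The paper reuses the template of Lemma~\ref{lemma:BB}:
\begin{enumerate}
\item symmetrize with an independent copy $\tilde Y$ via Jensen;
\item interpolate along $Y_\theta = Y\sin\theta + \tilde Y\cos\theta$;
\item apply Jensen and Fubini in $\theta$ and use the Gaussian moment generating function of $Y'_\theta$;
\item finish with Markov's inequality at $t = 1/(\sqrt{n}\log n)$.
\end{enumerate}

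\textbf{Your route.} You observe that, conditionally on $X$, $F_{uv}$ is a linear form in the Gaussian matrix $Y$. It is therefore exactly $\mathcal{N}(0,\sigma^2)$ with $\sigma^2 = \|\nabla_Y F_{uv}\|^2$. Because this gradient does not depend on $Y$, the paper's interpolation collapses to the identity $F_{uv}(X,Y) - F_{uv}(X,\tilde Y) = \langle \nabla_Y F_{uv}, Y-\tilde Y\rangle$. Its extra steps cost only constant factors in the variance.

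\textbf{Shared estimate.} In both arguments the substantive step is the same bound on $A_X$:
\[
\|\nabla_Y F_{uv}\|^2 \le \|X_{1:n}\|^2 \sum_j v_j^2 \|a^{(j)}\|^2 \le C_\gamma n^2\log^2 n .
\]
Your $a^{(j)}$ is the paper's $W_j$, and this is exactly your bound on $\sigma^2$.

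\textbf{Comparison.} Your version gains three things.
\begin{enumerate}
\item It is shorter.
\item It explains transparently why no event $B_Y$, and hence no factor $3$, appears.
\item It gives a sharper exponent, $n\log^2 n$ rather than $n\log n$. The paper's choice of $t$ is not the optimizer, which is of order $n^{-1/2}$.
\end{enumerate}
Either exponent easily beats the $9^{n+m}$ net cardinality in Proposition~\ref{prop:GB}, so nothing downstream changes. The paper's version gains only uniformity with Lemma~\ref{lemma:BB}, where $F_u$ is quadratic in $Y$ and the interpolation is genuinely needed.

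Your remark that $\gamma_2$ plays no role in this lemma is also accurate; it matters only for the net size in Proposition~\ref{prop:GB}.
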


\begin{proof}
    Suppose $\tilde{Y}$ is an independent copy of $Y$. Define $\phi(y) = \E[\exp\{ t(F_{uv}(X, Y) - y) \} | X]$. Let $y = F_{uv}(X, \tilde{Y})$, then $\E[y|X]=0$. By Jensen's inequality,
    \begin{align} \label{eq:jensenGB}
        \E\left[ \left. \exp\left\{ tF_{uv}(X, Y) \right\} \right| X \right] 
        \leq \E\left[ \left. \exp\left\{ t \left( F_{uv}(X, Y) - F_{uv}(X, \tilde{Y}) \right) \right\} \right| X \right].
    \end{align}
    Next, we check
    \begin{equation*}
        \nabla_{Y_j}F_{uv}(X, Y) = \sum_{i\in [n]} u_i v_j(X_i^\top X_{n+j}) R X_i = v_j RX_{1:n} W_j,
    \end{equation*}
    where $W_j\in\reals^n$ with entries: $W_{j, i} = u_i(X_i^\top X_{n+j})$, $i\in[n]$. When $A_X$ takes place,
    \begin{align} \label{eq:normGB}
        \left\| \nabla_{Y} F_{uv}(X, Y) \right\|^2 
        \leq \|X_{1:n}\|^2 \sum_{j\in[m]} v_j^2 \|W_j\|^2 
        = \|X_{1:n}\|^2 \sum_{j\in[m]} \sum_{i\in[n]} u_i^2 v_j^2 (X_i^\top X_{n+j})^2 
        \leq C_{\gamma} n^2\log^2{n}.
    \end{align}
    Note $C_{\gamma}$ is a constant depending on $\gamma_1$ and $\gamma_2$, and its value may change line to line. Denote $Y_{\theta} = Y\sin\theta + \tilde{Y}\cos\theta$ and $Y_{\theta}' = \frac{\td Y_{\theta}}{\td \theta} = Y\cos\theta - \tilde{Y} \sin\theta$. It is easy to verify that $Y_{\theta}$ and $Y_{\theta}'$ are independent. Note that $Y_0=\tilde{Y}$ and $Y_{\pi/2} = Y$. Then,
    \begin{equation*}
        F_{uv}(X, Y) - F_{uv}(X, \tilde{Y}) = \int_0^{\pi/2} \left\langle \nabla_{Y_{\theta}}F_{uv}(X, Y_{\theta}), Y_{\theta}' \right\rangle \td\theta.
    \end{equation*}
    Equivalently, we have
    \begin{equation*}
        \E \left[ \left. \exp\left\{ t \left( F_{uv}(X, Y) - F_{uv}(X, \tilde{Y}) \right) \right\} \right| X \right]
        = \E \left[ \left. \exp\left\{ t \int_0^{\pi/2} \left\langle \nabla_{Y_{\theta}}F_{uv}(X, Y_{\theta}), Y_{\theta}' \right\rangle \td\theta \right\} \right| X \right].
    \end{equation*}
    Applying the Inequality \ref{eq:jensenGB},
    \begin{align*}
        \E\left[ \left. \exp\left\{ tF_{uv}(X, Y) \right\} \right| X \right]
        \leq \E\left[ \left. \exp\left\{ t \int_0^{\pi/2} \left\langle \nabla_{Y_{\theta}}F_{uv}(X, Y_{\theta}), Y_{\theta}' \right\rangle \td\theta \right\} \right| X \right].
    \end{align*}
    By Jensen's inequality and Fubini's theorem,
    \begin{align*}
        \E\left[ \left. \exp\left\{ tF_{uv}(X, Y) \right\} \right| X \right] 
        \leq \frac{2}{\pi} \int_0^{\pi/2} \E\left[ \left. \exp\left\{ \frac{t\pi}{2}  \left\langle \nabla_{Y_{\theta}}F_{uv}(X, Y_{\theta}), Y_{\theta}' \right\rangle \right\} \right| X \right] \td\theta.
    \end{align*}
    Since $Y_{\theta}'$ is a matrix with each entry following a standard Gaussian distribution independently, we can use the moment generating function of $Y_{\theta}'$ to derive
    \begin{align*}
        \E\left[ \left. \exp\left\{ tF_{uv}(X, Y) \right\} \right| X \right] 
        \leq \frac{2}{\pi} \int_0^{\pi/2} \E\left[ \left. \exp\left\{ \frac{t^2\pi^2}{8}  \left\| \nabla_{Y_{\theta}}F_{uv}(X, Y_{\theta}) \right\|^2 \right\} \right| X \right] \td\theta.
    \end{align*}
    As $A_X$ takes place, by Inequality \ref{eq:normGB},
    \begin{align*}
        \E\left[ \left. \exp\left\{ tF_{uv}(X, Y) \right\} \right| X \right] 
        \leq \frac{2}{\pi} \int_0^{\pi/2} \E\left[ \left. \exp\left\{ t^2 C_{\gamma} n^2\log^2{n} \right\} \right| X \right] \td\theta
        \leq \exp\left\{ t^2 C_{\gamma} n^2\log^2{n} \right\}.
    \end{align*}
    By Markov's inequality,
    \begin{align*}
        \bbP\left( \left. F_{uv}(X, Y) \geq C n^{\frac{3}{2}}\log^2{n} \right| X \right) 
        \leq & \frac{\E\left[ \left. \exp\left\{ tF_{uv}(X, Y) \right\} \right| X \right]}{\exp\left\{ tC n^{\frac{3}{2}}\log{n} \right\}} \\
        \leq & \exp\left\{ t^2 C_{\gamma} n^2\log^2{n} - tC n^{\frac{3}{2}}\log^2{n} \right\}
    \end{align*}
    Taking $t = \frac{1}{\sqrt{n}\log{n}}$, there exist $C_\gamma >0$ such that
    \begin{equation*}
        \bbP\left( \left. F_{uv}(X, Y) \geq C n^{\frac{3}{2}}\log^2{n} \right| X \right)
        \leq \exp\{ -C_\gamma n\log{n} \}.
    \end{equation*}
\end{proof}

\begin{prop}
    \label{prop:GB}
    Suppose $X=(X_1,...,X_{n+m})\in\reals^{d\times (n+m)}$ and $Y=(Y_1,...,Y_n)\in\reals^{d\times n}$ are independent matrices. The collection $\{ X_1,\ldots, X_{n+m}, Y_1,\ldots,Y_n \}$ consists of independent random vectors, each distributed as $\mathcal{N}(\zero, I_d)$. Define matrix $L(X, Y) \in \reals^{n\times m}$, with entries
    \begin{align*}
        L(X, Y)_{ij} = \left( X_i^\top X_{n+j} \right) \left( X_i^\top R^\top Y_j \right)
    \end{align*}
    Let $n, m, d \rightarrow \infty$ with $d/n = \gamma_1 \in (0, \infty)$ and $d/m = \gamma_2\in(0, \infty)$. Then, with probability at least $1-C_\gamma d^{-2}$,
    \begin{equation*}
        \|L(X, Y)\| \leq 2C n^{\frac{3}{2}}\log^2{n}.
    \end{equation*}
\end{prop}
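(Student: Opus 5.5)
The argument mirrors the proof of Proposition~\ref{prop:BB}: discretize the two unit spheres with $\epsilon$-nets, condition on a good event for $X$, apply Lemma~\ref{lemma:GB} to each pair of net points, and take a union bound.

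The key difference from Proposition~\ref{prop:BB} is that $L(X,Y)$ is rectangular ($n\times m$) and not symmetric, so we need two nets. Let $N_\epsilon\subset S^{n-1}$ and $M_\epsilon\subset S^{m-1}$ be $\epsilon$-nets with $\epsilon=1/4$. By \cite{Vershynin_2018}, $|N_\epsilon|\le 9^n$, $|M_\epsilon|\le 9^m$, and
\[
\|L\|\le \frac{1}{1-2\epsilon}\max_{u\in N_\epsilon, v\in M_\epsilon} u^\top L v = 2\max_{u,v} u^\top L v .
\]
For fixed $(u,v)$, the bilinear form $u^\top L(X,Y)v$ is exactly $F_{uv}(X,Y)$ from Lemma~\ref{lemma:GB}. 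Since $F_{-u,v}=-F_{uv}$ and the net can be taken symmetric, the one-sided bound of the lemma suffices.

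Next we define the event $A_X$ as in Lemma~\ref{lemma:GB}: $\max_{i\in[n],j\in[m]}|X_i^\top X_{n+j}|\le C\sqrt{d\log d}$ and $\|X_{1:n}\|\le C\sqrt{n\log n}$. We control it in two parts.
\begin{itemize}
\item The argument of Lemma~\ref{lemma:AX}, applied to the $nm$ pairs $(i,n+j)$, gives the inner-product bound with probability at least $1-3nm\,d^{-4}\ge 1-C_\gamma d^{-2}$, using $n,m\asymp d$.
\item The operator-norm bound for a Gaussian matrix from \cite{Vershynin_2018} gives $\|X_{1:n}\|\le C\sqrt{n\log n}$ with probability at least $1-2\exp\{-C_0 n\log n\}$.
\end{itemize}
Hence $\bbP(A_X^c)\le C_\gamma d^{-2}$.

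Then we condition on $X$:
\[
\bbP\left(\|L\|\ge 2Cn^{3/2}\log^2 n\right)\le \E\left[\sum_{u\in N_\epsilon,v\in M_\epsilon}\bbP\left(F_{uv}\ge Cn^{3/2}\log^2 n\,\middle|\,X\right)\mathbb{I}_{A_X}\right]+\bbP(A_X^c).
\]
By Lemma~\ref{lemma:GB}, each conditional probability is at most $\exp\{-C_\gamma n\log n\}$. The union bound contributes a factor $9^{n+m}=\exp\{O(n)\}$, because $m=n\gamma_1/\gamma_2$. The factor $\log n$ in the exponent dominates this, so the whole sum is $\exp\{-C_\gamma' n\log n\}$ for large $n$. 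This is $o(d^{-2})$, and combining it with $\bbP(A_X^c)$ gives the claimed probability $1-C_\gamma d^{-2}$.

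The step that needs most care is verifying that the constants from Lemma~\ref{lemma:GB} can absorb the entropy of both nets. The cardinality $9^{n+m}$ depends on $m$ as well as $n$, so we must check that the exponent $C_\gamma n\log n$ in Lemma~\ref{lemma:GB} beats $(n+m)\log 9$ uniformly in $\gamma_1,\gamma_2$. This works because $C_\gamma$ is fixed while $\log n\to\infty$, but it requires $m\lesssim n$, which is exactly what the joint scaling assumption $d/n=\gamma_1$, $d/m=\gamma_2$ provides.
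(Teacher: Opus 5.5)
Your proposal is correct and follows the paper's proof essentially step for step. It uses two $1/4$-nets on $S^{n-1}$ and $S^{m-1}$ to get $\|L\|\le 2\max_{u,v}u^\top L v$, controls the good event $A_X$ via Lemma~\ref{lemma:AX} and the Gaussian operator-norm bound, applies Lemma~\ref{lemma:GB} conditionally on $X$, and absorbs the $9^{n+m}=e^{O(n)}$ net entropy into $e^{-C_\gamma n\log n}$ because $m\asymp n$. Your symmetric-net remark is harmless but unnecessary: the net inequality involves only the signed maximum of $u^\top L v$, so the one-sided tail in Lemma~\ref{lemma:GB} already suffices.
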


\begin{proof}
    Let $N_{\epsilon}$ be the $\epsilon$-net of the unit sphere $S^{n-1}$ and $M_{\epsilon}$ be the $\epsilon$-net of the unit sphere $S^{m-1}$. Suppose $u^*\in S^{n-1}$ and $v^*\in S^{m-1}$ satisfies 
    \begin{equation*}
        \|L(X, Y)\| = \sup_{u\in S^{n-1}, v\in S^{m-1}} \langle L(X, Y) u, v \rangle = \langle L(X, Y) u^*, v^* \rangle.
    \end{equation*}
    We can always find $u\in N_{\epsilon}$ with $\|u - u^*\| \leq \epsilon$ and $v\in M_{\epsilon}$ with $\| v - v^* \| \leq \epsilon$. Hence, 
    \begin{align*}
        \langle L(X, Y) u^*, v^* \rangle 
        = & \langle L(X, Y) u, v \rangle + \langle L(X, Y) u, (v^* - v) \rangle + \langle L(X, Y) (u^* - u), v \rangle \\
        \leq & \langle L(X, Y) u, v \rangle + \epsilon \|L(X, Y)\| + \epsilon \|L(X, Y)\|
    \end{align*}
    Taking the maximum on both sides of the inequality, we get
    \begin{equation*}
        \|L(X, Y)\| \leq \frac{1}{1-2\epsilon} \cdot \max_{u\in N_{\epsilon}, v\in M_{\epsilon}} \left\langle L(X, Y)u, v \right\rangle.
    \end{equation*}
    By \cite{Vershynin_2018}, the cardinality of $N_{\epsilon}$ and $M_{\epsilon}$ are bounded,
    \begin{equation*}
        |N_{\epsilon}| \leq \left( 1 + \frac{2}{\epsilon} \right)^n; \quad |M_{\epsilon}| \leq \left( 1 + \frac{2}{\epsilon} \right)^m.
    \end{equation*}
    Define the event $A_x$ as
    \begin{equation*}
        \max_{i\in[n],j\in[m]} \left| X_i^\top X_{n+j} \right| \leq C \sqrt{d\log{d}}, \quad \|X_{1:n}\|\leq C\sqrt{n\log{n}},
    \end{equation*}
    where $X_{1:n}$ formed by the first $n$ columns of $X$. Since entries of $X_{1:n}$ are independent standard Gaussian random vectors in $\reals^d$, \cite{Vershynin_2018} tells that the probability of $\|X_{1:n}\| \leq C\sqrt{n\log{n}}$ taking place is at least $1 - 2\exp\{ -C_0n\log{n} \}$. By Lemma \ref{lemma:AX}, we can directly deduce that $\bbP\left( A_X \right) \geq 1 - C_\gamma d^{-2}$. Taking $\epsilon = \frac{1}{4}$, we have
    \begin{align*}
        \bbP\left( \|L(X, Y)\| \geq 2C_2n^{\frac{3}{2}}\log^2{n} \right) 
        \leq & \E\left[ \bbP\left( \left. \max_{u\in N_{\epsilon}, v\in M_{\epsilon}} \left\langle L(X, Y)u, v \right\rangle \geq C_2n^{\frac{3}{2}}\log^2{n} \right| X \right) \mathbb{I}_{A_X} \right] + \bbP\left( A_X^c \right) \\
        \leq & 9^{n + m} \cdot \E\left[ \bbP\left( \left. \left\langle L(X, Y)u, v \right\rangle \geq C_2n^{\frac{3}{2}}\log^2{n} \right| X \right) \mathbb{I}_{A_X} \right] + \bbP\left( A_X^c \right)
    \end{align*}
    The inner product $\left\langle L(X, Y)u, v \right\rangle$ is the same as the $F_{u,v}(X, Y)$ function defined in Lemma \ref{lemma:GB}. With the existence of $\mathbb{I}_{A_X}$, conditions of Lemma \ref{lemma:GB} are satisfied. Also, $d/n = \gamma_1 \in (0, \infty)$ and $d/m = \gamma_2\in(0, \infty)$ tells that $n+m=cn$ for some positive constant $c$. Then, by Lemma \ref{lemma:GB},
    \begin{align*}
        \bbP\left( \|L(X, Y)\| \geq 2C n^{\frac{3}{2}}\log^2{n} \right) 
        \leq 9^{cn} \cdot \exp\left\{ -C_\gamma n\log{n} \right\} + C_\gamma d^{-2}
    \end{align*}
    With $n$ large enough, there exists a constant $C_\gamma >0$ such that
    \begin{equation*}
        \bbP\left( \|L(X, Y)\| \leq 2Cn^{\frac{3}{2}}\log^2{n} \right) \geq 1 - C_\gamma d^{-2}
    \end{equation*}    
\end{proof}

\begin{lem} 
    \label{lemma:LM_chi}
    Let $X, Y \sim \mathcal{N}(\zero, I_d)$ be independent Gaussian random vectors in $\mathbb{R}^d$. Suppose $d > 8\log{n}$. Then,
    \begin{equation*}
    \label{eq:x4LR}
        \bbP\left( \left| \|X\|^4 - d^2 - 2d \right| \geq Cd^{\frac{3}{2}}\sqrt{\log{n}} \right) 
        \leq 2n^{-2}.
    \end{equation*}
    Moreover, 
    \begin{equation*}
        \bbP\left( \left| \|X\|^2\|Y\|^2 - d^2 \right| \geq Cd^{\frac{3}{2}}\sqrt{\log{n}} \right) 
        \leq 4n^{-2} .
    \end{equation*}
\end{lem}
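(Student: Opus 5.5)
The plan is to reduce both statements to the standard Laurent--Massart chi-square tail bound. For $Q=\|X\|^2\sim\chi^2_d$ and any $x>0$,
\[
\bbP\left(|Q-d|\ge 2\sqrt{dx}+2x\right)\le 2e^{-x}.
\]
We take $x=2\log n$. The hypothesis $d>8\log n$ gives $x<d/4$, so $2x\le\sqrt{dx}$. Hence, with probability at least $1-2n^{-2}$,
\[
|Q-d|\le 3\sqrt{2d\log n}\le C\sqrt{d\log n}\le d.
\]

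For the first claim, we expand $Q^2-d^2-2d = (Q-d)^2 + 2d(Q-d) - 2d$ and bound each piece on the good event.
\begin{itemize}
\item The cross term satisfies $2d|Q-d|\le Cd^{3/2}\sqrt{\log n}$.
\item The quadratic term satisfies $(Q-d)^2\le Cd\log n\le Cd^{3/2}\sqrt{\log n}$, since $\sqrt{\log n}\le\sqrt d$.
\item The constant $2d$ is at most $d^{3/2}\sqrt{\log n}$ once $n\ge 3$.
\end{itemize}
Enlarging $C$ to absorb all three pieces gives the first inequality with failure probability at most $2n^{-2}$.

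For the second claim, we write $Q_X=\|X\|^2$ and $Q_Y=\|Y\|^2$ and use
\[
Q_XQ_Y-d^2=d(Q_X-d)+d(Q_Y-d)+(Q_X-d)(Q_Y-d).
\]
We apply the same chi-square bound to each of $Q_X$ and $Q_Y$ and take a union bound, which costs failure probability $4n^{-2}$. On the resulting event, each of the two linear terms is at most $Cd^{3/2}\sqrt{\log n}$. The product term is at most $Cd\log n\le Cd^{3/2}\sqrt{\log n}$, exactly as before. Independence of $X$ and $Y$ is not needed, because the union bound alone suffices.

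No step is a genuine obstacle. The one point that needs care is that the quadratic and product terms are of lower order, and this is guaranteed by $d>8\log n$, which forces $\log n\lesssim d$. Tracking the constants is routine.
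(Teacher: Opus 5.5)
Your proof is correct and follows essentially the paper's route: Laurent--Massart at $t=2\log n$, plus a union bound over $X$ and $Y$ for the product, giving the same failure probabilities $2n^{-2}$ and $4n^{-2}$. The only difference is bookkeeping, since the paper squares the one-sided thresholds $d+2\sqrt{2d\log n}+4\log n$ and $d-2\sqrt{2d\log n}$ (using $d>8\log n$ to keep the latter positive) instead of expanding $Q^2-d^2-2d$ as you do; one harmless slip is that $d>8\log n$ only gives $3\sqrt{2d\log n}<3d/2$, not $\le d$, though you never use that inequality.
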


\begin{proof}
    Let $W\sim \mathcal{X}_d^2$. \cite{Laurent} states that for any $t>0$, the tail probability of $W$ satisfies 
    \begin{equation*}
        \bbP\left( W \geq d + 2\sqrt{td} + 2t \right) \leq e^{-t}
        \quad \text{and} \quad
        \bbP\left( W \leq d - 2\sqrt{td} \right) \leq e^{-t}.
    \end{equation*}
    Note that $\|X\|^2$ and $\|Y\|^2$ are chi-square random variables with degree of freedom $d$. By $t=2\log{n}$, we have
    \begin{equation*}
        \bbP\left( \|X\|^4 \geq \left( d + 2\sqrt{2d\log{n}} + 4\log{n} \right)^2 \right) \leq n^{-2},
    \end{equation*}
    and
    \begin{equation*}
        \bbP\left( \|X\|^4 \leq \left( d - 2\sqrt{2d\log{n}} \right)^2 \right) \leq n^{-2}.
    \end{equation*}
    There exists a positive constant $C$ such that
    \begin{align*}
        \left( d + 2\sqrt{2d\log{n}} + 4\log{n} \right)^2 - d^2 - 2d 
        \leq Cd^{\frac{3}{2}}\sqrt{\log{n}}
    \end{align*}
    With $d>8\log{n}$, we have $d - 2\sqrt{2\log{n} d}>0$. Then,
    \begin{align*}
        & \bbP\left( \left| \|X\|^4 - d^2 - 2d \right|
        \geq C d^{\frac{3}{2}}\sqrt{\log{n}} \right) \\
        \leq & \bbP\left( \left| \|X\|^4 - d^2 - 2d \right|
        \geq \left( d + 2\sqrt{2d\log{n}} + 4\log{n} \right)^2 - d^2 - 2d \right) \\
        \leq & \bbP\left( \|X\|^4 \geq \left( d + 2\sqrt{2d\log{n}} + 4\log{n} \right)^2 \right) + \bbP\left( \|X\|^4 \leq \left( d - 2\sqrt{2d\log{n}} \right)^2 \right) \\
        \leq & 2n^{-2}
    \end{align*}
    
    Similarly, when $t=2\log{n}$, 
    \begin{equation*}
        \bbP\left( \|X\|^2\|Y\|^2 \geq \left(d + 2\sqrt{2d\log{n}} + 4\log{n}\right)^2 \right) \leq 2n^{-2},
    \end{equation*}
    and 
    \begin{equation*}
        \bbP\left( \|X\|^2\|Y\|^2 \leq \left( d - 2\sqrt{2d\log{n}} \right)^2  \right) \leq 2n^{-2}
    \end{equation*}
    There exists a positive constant $C$ such that
    \begin{align*}
        \left( d + 2\sqrt{2d\log{n}} + 4\log{n} \right)^2 - d^2 
        \leq C d^{\frac{3}{2}}\sqrt{\log{n}}
    \end{align*}
    With $d>8\log{n}$, we get the second probability bound
    \begin{align*}
        & \bbP\left( \left| \|X\|^2\|Y\|^2 - d^2 \right|
        \geq C d^{\frac{3}{2}}\sqrt{\log{n}} \right) \\
        \leq & \bbP\left( \left| \|X\|^2\|Y\|^2 - d^2 \right|
        \geq \left( d + 2\sqrt{2d\log{n}} + 4\log{n} \right)^2 - d^2\right) \\
        \leq & \bbP\left( \|X\|^2\|Y\|^2 \geq \left( d + 2\sqrt{2d\log{n}} + 4\log{n} \right)^2 \right) + \bbP\left( \|X\|^2\|Y\|^2 \leq \left( d - 2\sqrt{2d\log{n}} \right)^2 \right) \\
        \leq & 4n^{-2}
    \end{align*}
\end{proof}

\section{Proof of Theorem \ref{theo:mainEig}}\label{appendix:theo_eig}

\begin{proof}[Proof of Theorem \ref{theo:mainEig}]
    WLOG, we rearrange the columns of $X$ so that the first $|G|$ columns are $G$ indexed and the last $|B|$ columns are $B$ indexed. The matrix $Y$ rearranges its columns according to $X$.

    Let $E = H - \E[H]$. Denote the upper left $|G|\times |G|$ submatrix of $E$ as $E_G$, the lower right $|B|\times |B|$ submatrix of $E$ as $E_B$, and the off-diagonal submatrices with size $|G|\times |B|$ and $|B|\times |G|$ as $E_{GB}$ and $E_{BG}$, respectively. Then,
    \begin{equation*}
        \|E\| \leq \|E_G\| + \|E_B\| + 2\|E_{GB}\|.
    \end{equation*}
    
    Firstly, we find the upper bound of $\|E_G\|$. Rewrite $E_G = D_G + \tilde{E}_G$, where $D_G$ is the diagonal entries of $E_G$ and $\tilde{E}_G$ is the off-diagonal entries of $E_G$. Define a kernel matrix $K(X) \in \reals^{|G|\times|G|}$
    \begin{align*}
        K(X)_{ij} = \left\{ \begin{array}{rcl}
            \frac{1}{\sqrt{d}} k\left( \frac{1}{\sqrt{d}}X_i^\top X_j \right), & i\neq j \\
            0, & i=j 
        \end{array} \right. ,
    \end{align*}
    where $k(x) = x^2 - 1$. By \cite{Fan}, when $d/n=\gamma\in(0, \infty)$, the following holds with high probability $1-o(1)$,
    \begin{equation*}
        \|K(X)\| \leq C_\gamma.
    \end{equation*}
    Note that $\tilde{E}_G = d^{\frac{3}{2}} K(X)$, then
    \begin{equation*}
        \|\tilde{E}_G\| \leq C_\gamma d^{\frac{3}{2}}.
    \end{equation*}
    The condition $d>8\log{n}$ in Lemma \ref{lemma:LM_chi} holds automatically when $d, n\rightarrow \infty$, then with probability $1 - 2n^{-2}$,
    \begin{equation*}
        \|D_G\| = \max_{i\in G} \left| \|X_i\|^4 - d^2 - 2d \right| \leq Cd^{\frac{3}{2}}\sqrt{\log{n}}.
    \end{equation*}
    Overall, with high probability $1-o(1)$,
    \begin{equation*}
        \|E_G\| \leq \|D_G\| + \|\tilde{E}_G\| \leq C_\gamma d^{\frac{3}{2}}\sqrt{\log{n}}
    \end{equation*}

    Next, we consider $E_B$. Similarly, we decompose $E_B = D_B + \tilde{E}_B$, where $D_B$ is the diagonal entries of $E_B$ and $\tilde{E}_B$ is the off-diagonal entries of $E_B$. The condition $d>8\log{n}$ in Lemma \ref{lemma:LM_chi} holds automatically when $d, n\rightarrow \infty$, we can use Lemma \ref{lemma:LM_chi} to upper bound $\|D_B\|$. With probability $1 - 4n^{-2}$,
    \begin{equation*}
        \|D_B\| = \max_{i\in B} \left| \|X_i\|^2 \|Y_i\|^2 - d^2 \right| \leq Cd^{\frac{3}{2}}\sqrt{\log{n}}
    \end{equation*}
    Followed by Proposition \ref{prop:BB}, we have
    \begin{equation*}
        \bbP\left( \|\tilde{E}_B\| \leq Cn^{\frac{3}{2}}\log^2{n} \right) \geq 1 - C_\gamma d^{-2}.
    \end{equation*}
    With $d/n=\gamma\in(0, \infty)$, we conclude
    \begin{equation*}
        \|E_B\| 
        \leq \|D_B\| + \|\tilde{E}_B\| 
        \leq Cn^{\frac{3}{2}}\log^2{n}
    \end{equation*}
    with probability $1-C_\gamma n^{-2}$.

    Lastly, we are left to consider the off-diagonal blocks. By Proposition \ref{prop:GB}, with probability $1 - C_\gamma d^{-2}$,
    \begin{equation*}
        \|E_{GB}\| \leq Cn^{\frac{3}{2}}\log^2{n}.
    \end{equation*}
    Since $E_{BG} = E_{GB}^\top$, we have
    \begin{equation*}
        \|E_{BG}\| = \|E_{GB}\| \leq Cn^{\frac{3}{2}}\log^2{n}.
    \end{equation*}
    
    Combining all results, with high probability $1-o(1)$,
    \begin{equation*}
        \|E\| \leq C_\gamma \left( d^{\frac{3}{2}}\sqrt{\log{n}} + n^{\frac{3}{2}}\log^2{n} \right) \leq C_\gamma n^{\frac{3}{2}}\log^2{n}.
    \end{equation*}
\end{proof}

\section{Proof of Theorem \ref{theo:mainRS}}\label{appendix:theo_rs}

\begin{proof}[Proof of Theorem \ref{theo:mainRS}]
    Suppose $z\sim\mathcal{N}(0, 1)$. Denote $K:=\|z\|_{\psi_2}^2$. The tail probability of the summation of independent centred sub-exponential random variables is given by \cite{Vershynin_2018}: denote $K_i = \|X_i\|_{\psi_1}$, for any $t>0$,
    \begin{align} \label{eq:versh}
        \bbP\left( \left| \sum_{i=1}^n X_i \right| \geq t \right)
        \leq & 2\exp\left[ - c \min \left( \frac{t^2}{\sum_{i=1}^n K_i^2}, \frac{t}{\max_{i\in[n]} K_i } \right) \right]
    \end{align}

    Case $i\in G$: The deviation of row sum $S_i$ from its mean for $i\in G$ can be written as
    \begin{align*}
        \left| S_i - \E[S_i] \right| 
        & = \left| \left( \|X_i\|^4 - d^2 - 2d \right) + \sum_{j\in G\backslash \{i\}} \left( \left( X_i^\top X_j \right)^2 - d \right) + \sum_{j\in B} X_i^\top X_j X_i^\top R' Y_j \right| \\
        & = \left| \left( \|X_i\|^4 - d^2 - 2d \right) + \sum_{j\in G\backslash \{i\}} \left( \left( X_i^\top X_j \right)^2 - \|X_i\|^2 + \|X_i\|^2 - d \right) + \sum_{j\in B} X_i^\top X_j X_i^\top R' Y_j \right| \\
        & \leq \left| \|X_i\|^4 - d^2 - 2d \right| + r n \left| \|X_i\|^2 - d \right| \\
        & \hspace{2cm} + \|X_i\|^2 \left| \sum_{j\in G\backslash \{i\}} \left( \left( \frac{X_i^\top X_j}{\|X_i\|} \right)^2 - 1 \right) \right| + \|X_i\|^2 \left| \sum_{j\in B} \frac{X_i^\top X_j}{\|X_i\|} \frac{X_i^\top R' Y_j}{\|X_i\|} \right|
    \end{align*}
    With $\log n \lesssim d$, Lemma \ref{lemma:LM_chi} gives,
    \begin{align*}
        \bbP\left( \left| \|X\|^4 - d^2 - 2d \right| \geq Cd^{\frac{3}{2}}\sqrt{\log{n}} \right) 
        \leq 2n^{-2}.
    \end{align*}
    Also, by chi-square tail probability \cite{Laurent},
    \begin{align*}
        \bbP\left( \left| \|X_i\|^2 - d \right| \geq C \sqrt{d\log{n}} \right) \leq 2n^{-2}
    \end{align*}
    and
    \begin{align*}
        \bbP\left( \|X_i\|^2 \geq Cd \right) \leq n^{-2}.
    \end{align*}
    For $j\in G \backslash \{i\}$, we have $\left( \frac{X_i^\top X_j}{\|X_i\|} \right)^2 \sim \mathcal{X}_1^2$ and $\left\| \left( \frac{X_i^\top X_j}{\|X_i\|} \right)^2 - 1 \right\|_{\psi_1}\leq CK$. Given $X_i$, the random variables $\left( \frac{X_i^\top X_j}{\|X_i\|} \right)^2 - 1$ are iid centered sub-exponential random variables. By Inequality \ref{eq:versh}, the following probability tail follows when choosing $t = C K\sqrt{\frac{2 n\log{n}}{c_1}}$,
    \begin{align*}
        \bbP\left( \left| \sum_{j\in G\backslash \{i\}} \left( \left( \frac{X_i^\top X_j}{\|X_i\|} \right)^2 - 1 \right) \right| \geq C K\sqrt{\frac{2 n\log{n}}{c_1}} \right)
        \leq 2 \exp\left[ -c_1 \min \left( \frac{2\log{n}}{c_1}, \sqrt{\frac{2 n \log{n}}{c_1}} \right) \right].
    \end{align*}
    For $j\in B$ and given $X_i$, $\frac{X_i^\top X_j}{\|X_i\|} \frac{X_i^\top R^\top Y_j}{\|X_i\|}$ are independent sub-exponential random variables with sub-exponential norm $\left\| \frac{X_i^\top X_j}{\|X_i\|} \frac{X_i^\top R^\top Y_j}{\|X_i\|} \right\|_{\psi_1}\leq CK$. Applying the sub-exponential tail probability bound in Inequality \ref{eq:versh} with $t = C K\sqrt{\frac{2 n\log{n}}{c_2}}$, we have
    \begin{align*}
        \bbP\left( \left| \sum_{j\in B} \frac{X_i^\top X_j}{\|X_i\|} \frac{X_i^\top R^\top Y_j}{\|X_j\|} \right| \geq CK\sqrt{\frac{2 n\log{n}}{c_2}} \right)
        \leq 2 \exp\left[ -c_2 \min \left( \frac{2\log{n}}{c_2}, \sqrt{\frac{2 n \log{n}}{c_2}} \right) \right].
    \end{align*}
    Let $c = \min\{c_1, c_2\}$. Exist $n_0$ such that for all $n>n_0$,
    \begin{equation*}
        \bbP\left( \left| \sum_{j\in G\backslash \{i\}} \left( \left( \frac{X_i^\top X_j}{\|X_i\|} \right)^2 - 1 \right) \right| \geq C K\sqrt{n\log{n}} \right) \leq 2 n^{-2}, 
    \end{equation*}
    and
    \begin{equation*}
        \bbP\left( \left| \sum_{j\in B} \frac{X_i^\top X_j}{\|X_i\|} \frac{X_i^\top R^\top Y_j}{\|X_j\|} \right| \geq CK\sqrt{n\log{n}} \right) \leq 2 n^{-2}.
    \end{equation*}
    Combining the above results: with probability at least $1 - 9n^{-2}$,
    \begin{align*}
        \left| S_i - \E[S_i] \right| \leq C_1 \sqrt{d\log{n}} \left( d + n + \sqrt{nd} \right).
    \end{align*}

    Case $i\in B$: The deviation of row sum $S_i$ from its mean for $i\in B$ can be written as
    \begin{align*}
        \left| S_i - \E[S_i] \right| 
        & = \left| \left( \|X_i\|^2 \|Y_i\|^2 - d^2 \right) + \sum_{j\in G} \left( X_i^\top X_j Y_i^\top RX_j \right) + \sum_{j\in B\backslash \{i\}} X_i^\top X_j Y_i^\top Y_j \right| \\
        & \leq \left| \|X_i\|^2 \|Y_i\|^2 - d^2 \right| + \|X_i\| \|Y_i\| \left| \sum_{j\in G} \frac{X_i^\top X_j}{\|X_i\|} \frac{Y_i^\top R X_j}{\|Y_i\|} \right| + \|X_i\| \|Y_i\| \left| \sum_{j\in B\backslash \{i\}} \frac{X_i^\top X_j}{\|X_i\|} \frac{Y_i^\top Y_j}{\|Y_i\|} \right|
    \end{align*}
    Since the condition $d>8\log{n}$ is satisfied, we can use Lemma \ref{lemma:LM_chi},
    \begin{align*}
        \bbP\left( \left| \|X_i\|^2 \|Y_i\|^2 - d^2 \right| \geq C d^{\frac{3}{2}}\sqrt{\log{n}} \right) 
        \leq 4n^{-2}.
    \end{align*}
    Also, the intermediate step of the proof of Lemma \ref{lemma:LM_chi} gives
    \begin{equation*}
        \bbP\left( \|X_i\| \|Y_i\| \geq d + 2\sqrt{2d\log{n}} + 4\log{n} \right) \leq 2n^{-2}
    \end{equation*}
    Then, there exists some positive constant $C$ such that
    \begin{align*}
        \bbP\left( \|X_i\| \|Y_i\| \geq Cd \right) \leq 2n^{-2}.
    \end{align*}
    Given $X_i$ and $Y_i$, the sub-exponential random variables $\frac{X_i^\top X_j}{\|X_i\|} \frac{Y_i^\top R X_j}{\|Y_i\|}$, $j\in G$, are independent and identically distributed with mean zero and sub-exponential norm $\left\| \frac{X_i^\top X_j}{\|X_i\|} \frac{Y_i^\top R X_j}{\|Y_i\|} \right\|_{\psi_1}\leq CK$. Now, apply Inequality \ref{eq:versh} with $t = C K\sqrt{\frac{2 n\log{n}}{c_1}}$ to get
    \begin{align*}
        \bbP\left( \left| \sum_{j\in G} \frac{X_i^\top X_j}{\|X_i\|} \frac{Y_i^\top R X_j}{\|Y_i\|} \right| \geq C K\sqrt{\frac{2 n\log{n}}{c_1}} \right)
        \leq 2 \exp\left[ -c_1 \min \left( \frac{2\log{n}}{c_1}, \sqrt{\frac{2 n \log{n}}{c_1}} \right) \right].
    \end{align*}
    for $j\in B\backslash\{i\}$ and given $X_i$ and $Y_i$, the sub-exponential random variables $\frac{X_i^\top X_j}{\|X_i\|} \frac{Y_i^\top Y_j}{\|Y_i\|}$ are independent and identically distributed with mean zero and sub-exponential norm $\left\| \frac{X_i^\top X_j}{\|X_i\|} \frac{Y_i^\top Y_j}{\|Y_i\|} \right\|_{\psi_1}\leq CK$. Apply Inquality \ref{eq:versh} with $t = CK\sqrt{\frac{2 n\log{n}}{c_2}}$, we have
    \begin{align*}
        \bbP\left( \left| \sum_{j\in B\backslash \{i\}} \frac{X_i^\top X_j}{\|X_i\|} \frac{Y_i^\top Y_j}{\|Y_i\|} \right| \geq CK\sqrt{\frac{2 n\log{n}}{c_2}} \right)
        \leq 2 \exp\left[ -c_2 \min \left( \frac{2\log{n}}{c_2}, \sqrt{\frac{2 n \log{n}}{c_2}} \right) \right].
    \end{align*}
    Let $c = \min\{c_1, c_2\}$. Exist $n_0$ positive such that for all $n > n_0$, 
    \begin{equation*}
        \bbP\left( \left| \sum_{j\in G} \frac{X_i^\top X_j}{\|X_i\|} \frac{Y_i^\top R X_j}{\|Y_i\|} \right| \geq C K\sqrt{n\log{n}} \right) \leq 2 n^{-2}
    \end{equation*}
    and
    \begin{equation*}
        \bbP\left( \left| \sum_{j\in B\backslash \{i\}} \frac{X_i^\top X_j}{\|X_i\|} \frac{Y_i^\top Y_j}{\|Y_i\|} \right| \geq CK\sqrt{n\log{n}} \right) \leq 2 n^{-2}.
    \end{equation*}
    Combining the above results: with probability at least $1 - 10n^{-2}$,
    \begin{align*}
        \left| S_i - \E[S_i] \right| \leq C_2 d\sqrt{\log{n}} \left( \sqrt{d} + \sqrt{n} \right).
    \end{align*}
\end{proof}

\end{document}